\tikzset{external/export=false}
\pgfplotsset{compat=newest}
\newcommand{\powerset}{\mathcal{P}}
\newcommand{\card}[1]{{|#1|}}
\newcommand{\ldot}{\mathpunct{.}}
\newcommand{\bool}{\mathbb{B}}
\newcommand{\dom}{\mathrm{dom}}
\newcommand{\AP}{\mathit{AP}}
\newcommand{\imp}{\mathbin{\rightarrow}~}
\newcommand{\defeq}{\mathrel{\coloneqq}}
\newcommand{\set}[1]{\ensuremath{\{#1\}}}
\newif\ifcomment
\newcommand{\ltl}{\text{LTL}}
\newcommand{\hyperltl}{\text{HyperLTL}~}
\newcommand{\secltl}{\text{SecLTL}}
\newcommand{\langL}{\mathcal{L}}
\newcommand{\ap}{\text{AP}}
\renewcommand{\models}{\vDash}
\newcommand{\nmodels}{\nvDash}
\newcommand{\modelsltlfin}{\models_{\ltl_\mathit{fin}}}
\newcommand{\tracebox}[1]{\framebox[7ex][l]{#1}}
\newcommand{\pathvars}{\mathcal{V}}
\newcommand{\pathassignfin}{\Pi_\mathit{fin}}
\newcommand{\lang}[1]{\langL(#1)}
\newcommand{\langx}[2]{\langL_{#2}(#1)}
\newcommand{\langxx}[3]{\langL^{#2}_{#3}(#1)}
\newcommand{\proj}[3]{#1|^{#2}_{#3}}
\newcommand{\enc}[1]{enc(#1)}
\newcommand{\encx}[2]{enc_{#1}(#2)}
\newcommand{\encn}[2]{enc^{#2}(#1)}
\newcommand{\encnx}[3]{enc^{#3}_{#1}(#2)}
\newcommand{\varposo}[0]{v^{+}}
\newcommand{\varnego}[0]{v^{-}}
\newcommand{\varbotho}[0]{v}
\newcommand{\varpos}[2]{v^{+}_{#1,#2}}
\newcommand{\varneg}[2]{v^{-}_{#1,#2}}
\newcommand{\varboth}[2]{v_{#1,#2}}
\newcommand{\constr}[2]{constr(#1,#2)}
\newcommand{\constrn}[3]{constr^{#3}(#1,#2)}
\newcommand{\call}[2]{\ensuremath{\text{\texttt{#1}}(#2)}}
\newcommand\smallparagraph{\@startsection{paragraph}{4}{\z@}%
                       {4\p@ \@plus 0\p@ \@minus 0\p@}%
                       {-0.5em \@plus -0.22em \@minus -0.1em}%
                       {\normalfont\normalsize\bfseries}}
\title{Constraint-Based Monitoring of Hyperproperties\thanks{This work was partially supported by the German Research Foundation (DFG) as part of the Collaborative Research Center ``Methods and Tools for Understanding and Controlling Privacy'' (CRC 1223) and the Collaborative Research Center ``Foundations of Perspicuous Software Systems'' (CRC 248), and by the European Research Council (ERC) Grant OSARES (No. 683300).}}
\author{
Christopher Hahn \and
Marvin Stenger \and
Leander Tentrup
}
\authorrunning{C. Hahn \and M. Stenger \and L. Tentrup}
\institute{Reactive Systems Group, Saarland University\\\email{lastname@react.uni-saarland.de}}
\begin{document}

\maketitle

\begin{abstract}
Verifying hyperproperties at runtime is a challenging problem as hyperproperties, such as non-interference and observational determinism, relate multiple computation traces with each other. It is necessary to store previously seen traces, because every new incoming trace needs to be compatible with every run of the system observed so far. Furthermore, the new incoming trace poses requirements on \emph{future} traces. In our monitoring approach, we focus on those requirements by rewriting a hyperproperty in the temporal logic HyperLTL to a Boolean constraint system. A hyperproperty is then violated by multiple runs of the system if the constraint system becomes unsatisfiable. We compare our implementation, which utilizes either BDDs or a SAT solver to store and evaluate constraints, to the automata-based monitoring tool RVHyper.

\keywords{monitoring \and rewriting \and constraint-based \and hyperproperties.}

\end{abstract}

\sloppy
\section{Introduction}
As today's complex and large-scale systems are usually far beyond the scope of classic verification techniques like model checking or theorem proving, we are in the need of light-weight monitors for controlling the flow of information.
By instrumenting efficient monitoring techniques in such systems that operate in an unpredictable privacy-critical environment, countermeasures will be enacted before irreparable information leaks happen.
Information-flow policies, however, cannot be monitored with standard runtime verification techniques as they relate \emph{multiple} runs of a system.
For example, \emph{observational determinism}~\cite{journals/jcs/McLean92,conf/sp/Roscoe95,conf/csfw/ZdancewicM03} is a policy stating that altering non-observable input has no impact on the observable behavior.
Hyperproperties~\cite{journals/jcs/ClarksonS10} are a generalization of trace properties and are thus capable of expressing information-flow policies.
HyperLTL~\cite{conf/post/ClarksonFKMRS14} is a recently introduced temporal logic for hyperproperties, which extends Linear-time Temporal Logic (LTL)~\cite{conf/focs/Pnueli77} with trace variables and explicit trace quantification.
Observational determinism is expressed as the  formula $\forall \pi,\pi' \ldot (\mathit{out}_{\pi} \leftrightarrow \mathit{out}_{\pi'})\LTLweakuntil(\mathit{in}_{\pi} \nleftrightarrow \mathit{in}_{\pi'})$, stating that all traces $\pi,\pi'$ should agree on the output as long as they agree on the inputs.

In contrast to classic trace property monitoring, where a single run suffices to determine a violation, in runtime verification of HyperLTL formulas, we are concerned whether a \emph{set} of runs through a system violates a given specification.
In the common setting, those runs are given sequentially to the runtime monitor~\cite{conf/csfw/AgrawalB16,conf/rv/BonakdarpourF16,conf/rv/FinkbeinerHST17,conf/tacas/FinkbeinerHST18}, which determines if the given set of runs violates the specification.
An alternative view on HyperLTL monitoring is that every new incoming trace poses requirements on future traces.
For example, the event $\set{\mathit{in}, \mathit{out}}$ in the observational determinism example above asserts that for every other trace, the output $\mathit{out}$ has to be enabled if $\mathit{in}$ is enabled.
Approaches based on static automata constructions~\cite{conf/csfw/AgrawalB16,conf/rv/FinkbeinerHST17,conf/tacas/FinkbeinerHST18} perform very well on this type of specifications, although their scalability is intrinsically limited by certain parameters:
The automaton construction becomes a bottleneck for more complex specifications, especially with respect to the number of atomic propositions. Furthermore, the computational workload grows steadily with the number of incoming traces, as every trace seen so far has to be checked against every new trace.
Even optimizations~\cite{conf/rv/FinkbeinerHST17}, which minimize the amount of traces that must be stored, turn out to be too coarse grained as the following example shows.
Consider the monitoring of the HyperLTL formula $\forall \pi, \pi'. \LTLsquare (a_\pi \rightarrow \neg b_{\pi'})$, which states that globally if $a$ occurs on any trace $\pi$, then $b$ is not allowed to hold on any trace $\pi'$, on the following incoming traces:
\begin{align}
&\tracebox{\hspace{1ex}\set{a}}\tracebox{\hspace{1.6ex}\{\}} \tracebox{\hspace{1.6ex}\{\}} \tracebox{\hspace{1.6ex}\{\}}  &&\textit{$\neg b$ is enforced on the 1st pos.} \label{intro-one}\\ 
&\tracebox{\hspace{1ex}\{a\}}\tracebox{\hspace{1ex}\set{a}} \tracebox{\hspace{1.6ex}\{\}} \tracebox{\hspace{1.6ex}\{\}}  &&\textit{$\neg b$ is enforced on the 1st and 2nd pos.} \label{intro-two}\\ 
&\tracebox{\hspace{1ex}\{a\}}\tracebox{\hspace{1.6ex}\set{}} \tracebox{\hspace{1ex}\set{a}} \tracebox{\hspace{1.6ex}\{\}}  &&\textit{$\neg b$ is enforced on the 1st and 3rd pos.}\label{intro-three}
\end{align}
In prior work~\cite{conf/rv/FinkbeinerHST17}, we observed that traces, which pose \emph{less requirements} on future traces, can safely be discarded from the monitoring process. In the example above, the requirements of trace~\ref{intro-one} are dominated by the requirements of trace~\ref{intro-two}, namely that $b$ is not allowed to hold on the first and second position of new incoming traces. Hence, trace~\ref{intro-one} must not longer be stored in order to detect a violation.
But with the proposed language inclusion check in~\cite{conf/rv/FinkbeinerHST17}, neither trace~\ref{intro-two} nor trace~\ref{intro-three} can be discarded, as they pose incomparable requirements.
They have, however, overlapping constraints, that is, they both enforce $\neg b$ in the first step.

To further improve the conciseness of the stored traces information, we use \emph{rewriting}, which is a more fine-grained monitoring approach.
The basic idea is to track the requirements that future traces have to fulfill, instead of storing a set of traces.
In the example above, we would track the requirement that $b$ is not allowed to hold on the first three positions of every freshly incoming trace.
Rewriting has been applied successfully to trace properties, namely LTL formulas~\cite{conf/kbse/HavelundR01}.
The idea is to partially evaluate a given LTL specification $\varphi$ on an incoming event by unrolling $\varphi$ according to the expansion laws of the temporal operators.
The result of a single rewrite is again an LTL formula representing the updated specification, which the continuing execution has to satisfy.
We use rewriting techniques to reduce $\forall^2$HyperLTL formulas to LTL constraints and check those constraints for inconsistencies corresponding to violations.

In this paper, we introduce a complete and provably correct rewriting-based monitoring approach for $\forall^2$HyperLTL formulas.
Our algorithm rewrites a HyperLTL formula and a single event into a constraint composed of plain LTL and HyperLTL.
For example, assume the event $\{\mathit{in},\mathit{out}\}$ while monitoring observational determinism formalized above.
The first step of the rewriting applies the expansion laws for the temporal operators, which results in $(in_{\pi} \nleftrightarrow in_{\pi'}) \lor (out_{\pi} \leftrightarrow out_{\pi'}) \land \LTLnext ((out_{\pi} \leftrightarrow out_{\pi'})\LTLweakuntil(in_{\pi} \nleftrightarrow in_{\pi'}))$.
The event \set{in, out} is rewritten for atomic propositions indexed by the trace variable $\pi$. This means replacing each occurrence of $in$ or $out$ in the current expansion step, i.e., before the $\LTLnext$ operator, with $\top$.
Additionally, we strip the $\pi'$ trace quantifier in the current expansion step from all other atomic propositions.
This leaves us with $(\top \nleftrightarrow in) \lor (\top \leftrightarrow out) \land \LTLnext ((out_{\pi} \leftrightarrow out_{\pi'})\LTLweakuntil(in_{\pi} \nleftrightarrow in_{\pi'}))$.
After simplification we have $\neg in \lor out \land \LTLnext ((out_{\pi} \leftrightarrow out_{\pi'})\LTLweakuntil(in_{\pi} \nleftrightarrow in_{\pi'}))$ as the new specification, which consists of a plain LTL part and a HyperLTL part.
Based on this, we incrementally build a Boolean constraint system: we start by encoding the constraints corresponding to the LTL part and encode the HyperLTL part as variables. Those variables will then be incrementally defined when more elements of the trace become available.
With this approach, we solely store the necessary information needed to detect violations of a given hyperproperty.

We evaluate two implementations of our approach, based on BDDs and SAT-solving, against RVHyper~\cite{conf/tacas/FinkbeinerHST18}, a highly optimized automaton-based monitoring tool for temporal hyperproperties. Our experiments show that the rewriting approach performs equally well in general and better on a class of formulas which we call \emph{guarded invariants}, i.e., formulas that define a certain invariant relation between two traces.

\smallparagraph{Related Work.}

\ifcomment
This chapter should give a broad overview on existing methods on monitoring
information flow policies, with the main focus on online monitoring.

\section{Static Analysis}
Sabelfeld et al.[24] survey the literature focusing on static program analysis
for enforcement of security policies. In some cases, with com- pilers using
Just-in-time compilation techniques and dynamic inclusion of code at run time
in web browsers, static analysis does not guarantee secure execution at run
time. Type systems, frameworks for JavaScript [6] and ML [21] are some
approaches to monitor information flow. Several tools [18, 11, 19] add
extensions such as statically checked information flow annotations to Java
language. Clark et al.[7] present verification of information flow for
deterministic interactive pro- grams. On the other hand, our approach is
capable of monitoring the subset of hyperproperties described by
alternation-free \hyperltl and not just informa- tion flow without assistance
from static analyzers. In [2], the authors propose a technique for designing
runtime monitors based abstract interpretation of the system under inspection.
\fi

\ifcomment
\section{Dynamic Analysis}
\fi



With the need to express temporal hyperproperties in a succinct and formal manner,
the above mentioned temporal logics HyperLTL and HyperCTL*~\cite{conf/post/ClarksonFKMRS14} have been proposed.
The model-checking~\cite{conf/post/ClarksonFKMRS14,conf/cav/FinkbeinerRS15,conf/cav/FinkbeinerHT18}, satisfiability~\cite{conf/concur/FinkbeinerH16}, and realizability problem~\cite{conf/cav/FinkbeinerHLST18} of $\hyperltl$ has been studied before.

Runtime verification of HyperLTL formulas was first considered for (co-)$k$-safety hyperproperties~\cite{conf/csfw/AgrawalB16}.
In the same paper, the notion of monitorability for
HyperLTL was introduced.
The authors have also identified syntactic classes of
HyperLTL formulas that are monitorable and they proposed a monitoring algorithm based on a progression logic expressing trace interdependencies and the composition of an LTL$_3$ monitor.

Another automata-based approach for monitoring HyperLTL formulas was proposed in~\cite{conf/rv/FinkbeinerHST17}.
Given a HyperLTL specification, the algorithm starts by creating a deterministic monitor automaton.
For every incoming trace it is then checked that all combinations with the already seen traces are accepted by the automaton.
In order to minimize the number of stored traces, a language-inclusion-based algorithm is proposed, which allows to prune traces with redundant information.
Furthermore, a method to reduce the number of combination of traces which have to get checked by analyzing the specification for relations such as reflexivity, symmetry, and transitivity with a HyperLTL-SAT solver~\cite{conf/concur/FinkbeinerH16,conf/cav/FinkbeinerHS17}, is proposed.
The algorithm is implemented in the tool RVHyper~\cite{conf/tacas/FinkbeinerHST18}, which was used to monitor information-flow policies and to detect spurious dependencies in hardware designs.

Another rewriting-based monitoring approach for HyperLTL is outlined in~\cite{conf/tacas/BrettSB17}. The idea is to identify a set of propositions of interest and aggregate constraints 
such that inconsistencies in the constraints indicate a violation of the HyperLTL formula. While the paper describes the building blocks for such a monitoring approach with a number of examples, we have, unfortunately, not been successful in applying the algorithm to other hyperproperties of interest, such as observational determinism.

In~\cite{conf/csfw/BonakdarpourF18}, the authors study the complexity of monitoring hyperproperties. They show that the form and size of the input, as well as the formula have a significant impact on the feasibility of the monitoring process. They differentiate between several input forms and study their complexity: a set of linear traces, tree-shaped Kripke structures, and acyclic Kripke structures. For acyclic structures and alternation-free HyperLTL formulas, the problems complexity gets as low as NC.

In~\cite{conf/isola/BonakdarpourSS18}, the authors discuss examples where static analysis can be combined with runtime verification techniques to monitor HyperLTL formulas beyond the alternation-free fragment. They discuss the challenges in monitoring formulas beyond this fragment and lay the foundations towards a general method.


\ifcomment
Russo et al.[23] concentrate on permissive techniques for the enforcement of
information flow under flow-sensitivity. It has been shown that in the
flow-insensitive case, a sound purely dynamic monitor is more permissive than
static analysis. However, they show the impossibility of such a monitor in the
flow-sensitive case. A framework for inlining dynamic information flow monitors
has been presented by Magazinius et al.[14]. The approach by Chudnov et al.[5] uses
hybrid analysis instead and argues that due to JIT compilation processes, it is
no longer possible to mediate every data and control flow event of the native
code. They leverage the results of Russo et al.[23] by inlining the security
monitors. Chudnov et al.[4] again use hybrid analysis of 2-safety hyperproperties
in relational logic. In [1], the authors propose an automata-based RV technique
for monitoring only a disjunctive fragment of alternation-free \hyperltl.

Austin et al.[3] implement a purely dynamic monitor, however, restrictions such as
“no-sensitive upgrade” were placed. Some techniques deploy taint tracking and
labelling of data variables dynamically [26, 20]. Zdancewic et al.[25] verify
information flow for concurrent programs. Most of the techniques cited 14above
aim to monitor security policies described solely with two trace quantifiers
(without alternation), on observing a single run, whereas, our work is for any
hyperproperties that can be described with alternation-free \hyperltl, when
multiple runs are observed.
\fi

\ifcomment
\section{Secure Multi-Execution} 

Secure Multi-Execution [10] is a technique to enforce non-interference. In
SME, one executes a program multiple times, once for each security level, us-
ing special rules for I/O operations. Outputs are only produced in the execution
linked to their security level. Inputs are replaced by default inputs except in exe-
cutions linked to their security level or higher. Input side effects are supported by
making higher-security-level executions reuse inputs obtained in lower-security-
level threads. This approach is sound in a deterministic language.
While there are small similarities between SME and our work, there are fun-
damental differences. SME only focuses on non-interference and aims to enforce
it, but there are many critical hyperproperties that differ from non-interference
that our method is able to monitor. Thus, SME enforces a security policy at the
cost of restricting what it can enforce, whereas our technique monitors a much
larger set of policies.
\fi

\ifcomment
\section{Offline}

Offline Verification
Basin et al.[2] develop a model checker for security protocols. Since
traditional tools and verification methodologies are not equipped to deal with
sets of traces, several results introduce new logics or operators to express
hyperproperties.  \secltl extends \ltl by using an additional hide modality [3].
It allows expression of non-interference as well as the instance until a high
level data should remain independent of interference from low level data. The
modal $\mu$-calculus does not suffice to express some information flow properties.
Epistemic logic has been used to implicitly quantify over traces [18].
However, \hyperltl and \hyperltl$^{*}$ [4] subsume epistemic logic and quantified
propositional temporal logic [19].  In [5], the authors introduce a model
checking algorithm for verifying \hyperltl formulas.
\fi


\section{Preliminaries}
Let $\mathit{AP}$ be a finite set of \emph{atomic propositions} and let $\Sigma = 2^\mathit{AP}$ be the corresponding \emph{alphabet}.
An infinite \emph{trace} $t \in  \Sigma^\omega$ is an infinite sequence over the alphabet.
A subset $T \subseteq \Sigma^\omega$ is called a \emph{trace property}. A \emph{hyperproperty} $H \subseteq 2^{(\Sigma^\omega)}$ is a generalization of a trace property.
A finite trace $t \in \Sigma^+$ is a finite sequence over $\Sigma$.
In the case of finite traces, $|t|$ denotes the length of a trace.
We use the following notation to access and manipulate traces:
Let $t$ be a trace and $i$ be a natural number.
$t[i]$ denotes the $i$-th element of $t$.
Therefore, $t[0]$ represents the first element of the trace.
Let $j$ be natural number. If $j \geq i$ and $i\geq|t|$, then
$t[i,j]$ denotes the sequence $t[i] t[i+1] \cdots t[min(j,|t|-1)]$. Otherwise it denotes the empty trace $\epsilon$.
$t[i\rangle$ denotes the suffix of $t$ starting at position $i$.
For two finite traces $s$ and $t$, we denote their concatenation by $s \cdot t$.

\smallparagraph{HyperLTL Syntax.}\label{thltl:syntax}
HyperLTL~\cite{conf/post/ClarksonFKMRS14} extends LTL with trace variables and trace quantifiers.
Let $\mathcal{V}$ be a finite set of trace variables.
The syntax of HyperLTL is given by the grammar
\begin{alignat*}{3}
  \varphi  &{}\coloneqq ~ \forall \pi . \; \varphi \mid \exists \pi . \; \varphi \mid \psi  \\
  \psi &{}\coloneqq ~ a_{\pi} \mid \psi \wedge \psi \mid \neg \psi \mid \LTLnext \psi  \mid \psi \; \LTLuntil \;  \psi \enspace ,
\end{alignat*}
where $a \in \mathit{AP}$ is an atomic proposition and $\pi \in \mathcal{V}$ is a trace variable.
Atomic propositions are indexed by trace variables. The explicit trace quantification enables us to express properties like ``on all traces $\varphi$ must hold'', expressed by $\forall \pi \ldot \varphi$. Dually, we can express ``there exists a trace such that $\varphi$ holds'', expressed by $\exists \pi \ldot \varphi$.
We use the standard derived operators \emph{release} $\varphi \LTLrelease \psi \coloneqq \neg(\neg \varphi \LTLuntil \neg\psi)$, \emph{eventually} $\LTLdiamond \varphi \coloneqq \mathit{true} \LTLuntil \varphi$, \emph{globally} $\LTLsquare \varphi \coloneqq \neg \LTLdiamond \neg \varphi$, and \emph{weak until} $\varphi_1 \LTLweakuntil \varphi_2 \coloneqq (\varphi_1 \LTLuntil \varphi_2) \vee \LTLsquare \varphi_1$.
As we use the finite trace semantics, $\LTLnext \varphi$ denotes the \emph{strong} version of the next operator, i.e., if a trace ends before the satisfaction of $\varphi$ can be determined, the satisfaction relation, defined below, evaluates to false.
To enable duality in the finite trace setting, we additionally use the \emph{weak} next operator $\LTLweaknext \varphi$ which evaluates to true if a trace ends before the satisfaction of $\varphi$ can be determined and is defined as $\LTLweaknext \varphi \coloneqq \neg \LTLnext \neg \varphi$.
We call $\psi$ of a HyperLTL formula $\vec{Q}. \psi$, with an arbitrary quantifier prefix $\vec{Q}$, the \emph{body} of the formula. A HyperLTL formula $\vec{Q}. \psi$ is in the \emph{alternation-free fragment} if either $\vec{Q}$ consists solely of universal quantifiers or solely of existential quantifiers.
We also denote the respective alternation-free fragments as the $\forall^n$ fragment and the $\exists^n$ fragment, with $n$ being the number of quantifiers in the prefix.

\smallparagraph{Finite Trace Semantics.}
We recap the finite trace semantics for HyperLTL~\cite{conf/tacas/BrettSB17} which is itself based on the finite trace semantics of $\ltl$~\cite{books/daglib/0080029}.
In the following, when using $\lang{\varphi}$ we refer to the finite trace semantics of a HyperLTL formula $\varphi$.
Let $\pathassignfin : \pathvars \rightarrow \Sigma^+$ be a partial function mapping trace variables to finite traces. We define $\epsilon[0]$ as the empty set.
$\pathassignfin[i\rangle$ denotes the trace assignment that is equal to $\pathassignfin(\pi)[i\rangle$ for all $\pi \in \dom(\pathassignfin)$.
By slight abuse of notation, we write $t \in \pathassignfin$ to access traces $t$ in the image of $\pathassignfin$.
The satisfaction of a HyperLTL formula $\varphi$ over a finite trace assignment $\pathassignfin$ and a set of finite traces $T$, denoted by $\pathassignfin \models_T \varphi$, is defined as follows:
\begin{equation*}
\begin{array}{ll}
    \pathassignfin \models_T a_\pi         \qquad \qquad & \text{if } a \in \pathassignfin(\pi)[0] \\
    \pathassignfin \models_T \neg \varphi              & \text{if } \pathassignfin \nmodels_T \varphi \\
    \pathassignfin \models_T \varphi \lor \psi         & \text{if } \pathassignfin \models_T \varphi \text{ or } \pathassignfin \models_T \psi \\
    \pathassignfin \models_T \LTLnext \varphi          & \text{if } \forall t \in \pathassignfin \ldot |t|>1 \text{ and } \pathassignfin[1\rangle \models_T \varphi \\
    \pathassignfin \models_T \varphi\LTLuntil\psi      & \text{if } \exists i < \min\nolimits_{t \in \pathassignfin} |t| \ldot \pathassignfin[i\rangle \models_T \psi \land \forall j < i \ldot \pathassignfin[j\rangle \models_T \varphi \\
    \pathassignfin \models_T \exists \pi \ldot \varphi & \text{if there is some } t \in T \text{ such that } \pathassignfin[\pi \mapsto t] \models_T \varphi \\
    \pathassignfin \models_T \forall \pi \ldot \varphi & \text{if for all } t \in T \text{ such that } \pathassignfin[\pi \mapsto t] \models_T \varphi
\end{array}
\end{equation*}
%
Due to duality of $\LTLuntil$/$\LTLrelease$, $\LTLnext$/$\LTLweaknext$, $\exists$/$\forall$, and the standard Boolean operators, every HyperLTL formula $\varphi$ can be transformed into negation normal form~(NNF), i.e., for every $\varphi$ there is some $\psi$ in negation normal form such that for all $\pathassignfin$ and $T$ it holds that $\pathassignfin \models_T \varphi$ if, and only if, $\pathassignfin \models_T \psi$.
The standard LTL semantic, written $t \modelsltlfin \varphi$, for some LTL formula $\varphi$ is equal to $\set{ \pi \mapsto t }_\mathit{fin} \models_\emptyset \varphi'$, where $\varphi'$ is derived from $\varphi$ by replacing every proposition $p \in \mathit{AP}$ by $p_\pi$.

\section{Rewriting HyperLTL}  \label{sec:rewriting}
%
Given the body $\varphi$ of a $\forall^2$HyperLTL formula $\forall \pi,\pi' \ldot \varphi$, and a finite trace $t \in \Sigma^+$, we define alternative language characterizations.
These capture the intuitive idea that, if one fixes a finite trace $t$, the language of $\forall \pi,\pi' \ldot \varphi$ includes exactly those traces $t'$ that satisfy $\varphi$ in conjunction with $t$.
\begin{equation*}
\begin{array}{ll}
\langxx{\varphi}{\pi}{t} & \defeq \left\{ t' \in \Sigma^+ | \left\{\pi \mapsto t, \pi' \mapsto t' \right\}_\mathit{fin} \models \varphi \right\} \\
\langxx{\varphi}{\pi'}{t} & \defeq \left\{ t' \in \Sigma^+ | \left\{\pi \mapsto t', \pi' \mapsto t \right\}_\mathit{fin} \models \varphi \right\} \\
\langx{\varphi}{t} & \defeq \langxx{\varphi}{\pi}{t} \cap \langxx{\varphi}{\pi'}{t}\\
\end{array}
\end{equation*}
We call $\hat{\varphi} \defeq \varphi \land \varphi[\pi'/\pi, \pi/\pi']$ the symmetric closure of $\varphi$, where $\varphi[\pi'/\pi, \pi/\pi']$ represents the expression $\varphi$ in which the trace variables $\pi, \pi'$ are swapped. 
The language of the symmetric closure, when fixing one trace variable, is equivalent to the language of $\varphi$.
\begin{lemma}
	\label{lem:symm-closure}
	Given the body $\varphi$ of a $\forall^2$HyperLTL formula $\forall \pi,\pi' \ldot \varphi$, and a finite trace $t \in \Sigma^+$, it holds that
	$ \langxx{\hat{\varphi}}{\pi}{t} = \langx{\varphi}{t}$.
\end{lemma}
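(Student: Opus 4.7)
The plan is to unfold the two language definitions and reduce everything to a standard renaming/substitution lemma for the HyperLTL semantics. First I would fix an arbitrary $t' \in \Sigma^+$ and use the definition of $\hat{\varphi}$ to rewrite membership in $\langxx{\hat{\varphi}}{\pi}{t}$ as the conjunction
\[
\{\pi \mapsto t, \pi' \mapsto t'\}_\mathit{fin} \models \varphi
\quad\text{and}\quad
\{\pi \mapsto t, \pi' \mapsto t'\}_\mathit{fin} \models \varphi[\pi'/\pi, \pi/\pi'].
\]
By the definition of $\langxx{\varphi}{\pi}{t}$, the first conjunct is exactly the statement $t' \in \langxx{\varphi}{\pi}{t}$. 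It remains to show that the second conjunct is equivalent to $t' \in \langxx{\varphi}{\pi'}{t}$, i.e., to
$\{\pi \mapsto t', \pi' \mapsto t\}_\mathit{fin} \models \varphi$.

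The heart of the proof is therefore a \emph{renaming lemma} saying that, for any trace assignment $\Pi_\mathit{fin}$ and any set $T$,
\[
\Pi_\mathit{fin} \models_T \varphi[\pi'/\pi, \pi/\pi']
\iff
\Pi_\mathit{fin}[\pi \leftrightarrow \pi'] \models_T \varphi,
\]
where $\Pi_\mathit{fin}[\pi \leftrightarrow \pi']$ denotes the assignment obtained by swapping the images of $\pi$ and $\pi'$. I would prove this by structural induction on $\varphi$. The base case $a_\pi$ (and $a_{\pi'}$) is immediate since the substitution turns $a_\pi$ into $a_{\pi'}$, and looking up $\pi'$ in $\Pi_\mathit{fin}$ yields the same trace as looking up $\pi$ in the swapped assignment. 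The Boolean and temporal cases go through because the substitution commutes with $\LTLnext$, $\LTLuntil$, $\neg$, and $\land$, and the semantics of these operators only relies pointwise or uniformly on the trace assignment (e.g., $\Pi_\mathit{fin}[i\rangle[\pi \leftrightarrow \pi'] = \Pi_\mathit{fin}[\pi \leftrightarrow \pi'][i\rangle$). Since $\varphi$ here is the \emph{body} of a $\forall^2$HyperLTL formula, there are no quantifiers to worry about.

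Applying the renaming lemma to the second conjunct with $\Pi_\mathit{fin} = \{\pi \mapsto t, \pi' \mapsto t'\}_\mathit{fin}$ yields exactly $\{\pi \mapsto t', \pi' \mapsto t\}_\mathit{fin} \models \varphi$, which is the defining condition for $t' \in \langxx{\varphi}{\pi'}{t}$. Combining both conjuncts gives
$t' \in \langxx{\varphi}{\pi}{t} \cap \langxx{\varphi}{\pi'}{t} = \langx{\varphi}{t}$,
establishing the desired equality.

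The only step that requires real work is the renaming lemma; everything else is definition-chasing. I expect the main obstacle to be bookkeeping in the induction, in particular ensuring that the substitution $[\pi'/\pi,\pi/\pi']$ is understood as a \emph{simultaneous} swap (so that a naive sequential reading does not collapse both variables into one), and checking the temporal cases carefully in the finite-trace semantics where suffixes and minimum lengths appear.
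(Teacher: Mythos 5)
Your proposal is correct and follows essentially the same route as the paper: unfold $\hat{\varphi}$, split the conjunction, and convert satisfaction of $\varphi[\pi'/\pi,\pi/\pi']$ under $\{\pi\mapsto t,\pi'\mapsto t'\}_\mathit{fin}$ into satisfaction of $\varphi$ under the swapped assignment. The only difference is that the paper performs that last conversion silently as one step in a chain of set equalities, whereas you correctly isolate it as a renaming lemma to be proved by structural induction --- a reasonable amount of extra rigor, not a different approach.
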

\begin{proof}
	~\\
	$\langxx{\hat{\varphi}}{\pi}{t} = \left\{ t' \in \Sigma^+ | \left\{\pi \mapsto t, \pi' \mapsto t'\right\}_\mathit{fin} \models \hat{\varphi} \right\}$\\
	$\begin{array}{ll}
	&\hspace{5.9ex}= \left\{ t' \in \Sigma^+ | \left\{\pi \mapsto t, \pi' \mapsto t'\right\}_\mathit{fin} \models \varphi \land \varphi[\pi'/\pi, \pi/\pi'] \right\} \\
	&\hspace{5.9ex}= \left\{ t' \in \Sigma^+ | \left\{\pi \mapsto t, \pi' \mapsto t'\right\}_\mathit{fin} \models \varphi , \left\{\pi \mapsto t, \pi' \mapsto t' \right\}_\mathit{fin} \models \varphi[\pi'/\pi, \pi/\pi'] \right\} \\
	&\hspace{5.9ex}= \left\{ t' \in \Sigma^+ | \left\{\pi \mapsto t, \pi' \mapsto t'\right\}_\mathit{fin} \models \varphi , \left\{\pi \mapsto t', \pi' \mapsto t \right\}_\mathit{fin} \models \varphi \right\} 
	= \langx{\varphi}{t} 
	\end{array}$
\end{proof}
We exploit this to rewrite a $\forall^2$HyperLTL formula into an LTL formula.
    We define the projection $\proj{\varphi}{\pi}{t}$ of the body $\varphi$ of a $\forall^2$HyperLTL formula $\forall \pi,\pi' \ldot \varphi$ in NNF and a finite trace $t \in \Sigma^+$ to an $\ltl$ formula recursively on the structure of $\varphi$:
\begin{align*}
\begin{array}{lcllcl}
    \proj{a_\pi}{\pi}{t}         \qquad & \coloneqq &
        \begin{cases}
            \top & \text{if } a \in t[0]\\
            \bot & \text{otherwise}\\
        \end{cases} \qquad &
    \proj{\neg a_\pi}{\pi}{t}         \qquad & \coloneqq &
        \begin{cases}
            \top & \text{if } a \notin t[0]\\
            \bot & \text{otherwise}\\
        \end{cases}\\
    \proj{a_{\pi'}}{\pi}{t}             & \coloneqq & a &
    \proj{\neg a_{\pi'}}{\pi}{t}             & \coloneqq & \neg a\\
    \proj{(\varphi \lor \psi)}{\pi}{t}    & \coloneqq & \proj{\varphi}{\pi}{t} \lor \proj{\psi}{\pi}{t} &
    \proj{(\varphi \land \psi)}{\pi}{t}    & \coloneqq & \proj{\varphi}{\pi}{t} \land \proj{\psi}{\pi}{t}
\end{array}\\
\begin{array}{lcl}
    \proj{(\LTLnext \varphi)}{\pi}{t}     & \coloneqq &
        \begin{cases}
            \bot & \text{if } |t| \leq 1\\
            \LTLnext \proj{\varphi}{\pi}{t[1\rangle} & \text{otherwise}\\
        \end{cases}\\
    \proj{(\LTLweaknext \varphi)}{\pi}{t} & \coloneqq &
        \begin{cases}
            \top & \text{if } |t| \leq 1\\
            \LTLweaknext \proj{\varphi}{\pi}{t[1\rangle} & \text{otherwise}\\
        \end{cases}\\
    \proj{(\varphi \LTLuntil \psi)}{\pi}{t} & \coloneqq &
        \begin{cases}
            \proj{\psi}{\pi}{t} & \text{if } |t| \leq 1\\
            \proj{\psi}{\pi}{t} \lor (\proj{\varphi}{\pi}{t} \land \LTLnext (\proj{(\varphi \LTLuntil \psi)}{\pi}{t[1\rangle})) & \text{otherwise}\\
        \end{cases}\\
    \proj{(\varphi \LTLrelease \psi)}{\pi}{t} & \coloneqq &
        \begin{cases}
            \proj{\psi}{\pi}{t} & \text{if } |t| \leq 1\\
            \proj{\psi}{\pi}{t} \land (\proj{\varphi}{\pi}{t} \lor \LTLweaknext (\proj{(\varphi \LTLrelease \psi)}{\pi}{t[1\rangle})) & \text{otherwise}\\
        \end{cases}\\
\end{array}
\end{align*}

\begin{theorem}
\label{thm:ltlfin-compat}
Given a $\forall^2$HyperLTL formula $\forall \pi, \pi' \ldot \varphi$ and any two finite traces $t,t' \in \Sigma^+$ it holds that
        $t' \in \langxx{\varphi}{\pi}{t}
        \text{ if, and only if } t' \modelsltlfin \proj{\varphi}{\pi}{t}$.
\end{theorem}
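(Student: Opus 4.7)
The plan is to proceed by a combined induction that mirrors the recursive structure of the projection itself: structural induction on $\varphi$ (assumed in NNF), with an inner case split on whether $|t| \leq 1$ or $|t| > 1$ in the temporal cases. Well-foundedness is ensured because every recursive invocation either strips a top-level connective from $\varphi$ or shortens $t$ to $t[1\rangle$. Throughout, I would unfold the HyperLTL semantics of $\{\pi \mapsto t, \pi' \mapsto t'\}_\mathit{fin} \models \varphi$ on the left and the finite-trace LTL semantics of $t' \modelsltlfin \proj{\varphi}{\pi}{t}$ on the right, and verify that they coincide.

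The atomic and Boolean cases are routine: a literal $a_\pi$ (resp.\ $\neg a_\pi$) projects to $\top$ or $\bot$ according to whether $a \in t[0]$, which is exactly what the HyperLTL atom demands, while a literal $a_{\pi'}$ (resp.\ $\neg a_{\pi'}$) projects to the bare $a$ (resp.\ $\neg a$), and LTL then checks $t'[0]$ against $a$ as required. Conjunction and disjunction distribute through the projection and through both semantics, so the inductive hypothesis on the subformulas suffices.

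For $\LTLnext \varphi$ and $\LTLweaknext \varphi$ I would split on $|t|$. When $|t| \leq 1$, HyperLTL's strong next fails outright because its semantics demands that \emph{every} trace in the assignment have length greater than $1$, and the projection delivers $\bot$; dually, the weak next is vacuously true and the projection is $\top$. When $|t| > 1$, the projection emits the corresponding strong (resp.\ weak) LTL next; this LTL operator still enforces (resp.\ tolerates) the analogous constraint on $|t'|$, and the inductive hypothesis on $\varphi$ with the shortened trace $t[1\rangle$ closes the case.

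The main obstacle is the $\LTLuntil$ case (and dually $\LTLrelease$), because HyperLTL's until quantifies over $i < \min(|t|, |t'|)$ while the projection hard-codes only $|t|$ into the formula. When $|t| \leq 1$, the min forces $i = 0$, so HyperLTL's until collapses to $\psi$, matching $\proj{(\varphi \LTLuntil \psi)}{\pi}{t} = \proj{\psi}{\pi}{t}$. When $|t| > 1$, I would apply the HyperLTL expansion $\varphi \LTLuntil \psi \equiv \psi \lor (\varphi \land \LTLnext (\varphi \LTLuntil \psi))$ at the semantic level and align it with the structurally identical disjunction produced by the projection; the inductive hypothesis applied to $\varphi$ and $\psi$ (same trace) and to $\varphi \LTLuntil \psi$ (on the shorter trace $t[1\rangle$) then closes the argument, and the missing $|t'|$ bound on the HyperLTL index $i$ is recovered from the strong-next guard on the right-hand disjunct of the projection.
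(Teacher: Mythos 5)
Your proposal is correct and follows essentially the same route as the paper: the paper organizes the argument as an outer induction on $|t|$ with an inner structural induction on $\varphi$, which is just a different packaging of your combined well-founded induction on the pair (formula, trace), and the case analyses coincide, including the key observation that the $i < \min(|t|,|t'|)$ bound in the until semantics is recovered from the $|t|\leq 1$ base of the projection together with the strong-next guard on $t'$. No gaps.
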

\begin{proof}
    By induction on the size of $t$.
    Induction Base ($t=e$, where $e \in \Sigma$): Let $t' \in \Sigma^+$ be arbitrarily chosen.
    We distinguish by structural induction the following cases over the formula $\varphi$. We begin with the base cases.
    \begin{itemize}
        \item$a_{\pi}$:
            we know by definition that  $\proj{a_{\pi}}{\pi}{t}$ equals $\top$ if $a \in t[0]$ and $\bot$ otherwise,
            so it follows that
            $t' \modelsltlfin \proj{a_{\pi}}{\pi}{t} \Leftrightarrow a \in t[0] \Leftrightarrow t' \in \langxx{a_{\pi}}{\pi}{t}$.
        \item
        $a_{\pi'}$:
            $
                t' \in \langxx{a_{\pi'}}{\pi}{t}
                \Leftrightarrow  a \in t'[0]
                \Leftrightarrow  t' \modelsltlfin a
                \Leftrightarrow  t' \modelsltlfin \proj{a_{\pi'}}{\pi}{t}.
            $
        \item 
            $\neg a_{\pi}$ and $\neg a_{\pi'}$ are proven analogously.
    \end{itemize}
    The structural induction hypothesis states that
    $\forall t' \in \Sigma^+ \ldot t' \in \langxx{\psi}{\pi}{t} \Leftrightarrow t' \modelsltlfin \proj{\psi}{\pi}{t}$ (SIH1)
    , where $\psi$ is a strict subformula of $\varphi$.
    \begin{itemize}
        \item$\varphi \lor \psi$:
            $
                 t' \in \langxx{\varphi \lor \psi}{\pi}{t}
                \Leftrightarrow  (t' \in \langxx{\varphi}{\pi}{t}) \lor (t' \in \langxx{\psi}{\pi}{t})
                \xLeftrightarrow{\text{SIH1}}  (t' \modelsltlfin \proj{\varphi}{\pi}{t}) \lor (t' \modelsltlfin \proj{\psi}{\pi}{t})
                \Leftrightarrow  t' \modelsltlfin \proj{(\varphi \lor \psi)}{\pi}{t}.
            $
        \item$\LTLnext \varphi$:
            $t' \in \langxx{\LTLnext \varphi}{\pi}{t} \xLeftrightarrow{|t|=1} \bot \xLeftrightarrow{|t|=1} t' \modelsltlfin \proj{(\LTLnext \varphi)}{\pi}{t}$.
        \item$\varphi \LTLuntil \psi$:
            $
                 t' \in \langxx{\varphi \LTLuntil \psi}{\pi}{t}
                \xLeftrightarrow{|t|=1}  t' \in \langxx{\psi}{\pi}{t}
                \xLeftrightarrow{\text{SIH1}}  t' \modelsltlfin \proj{\psi}{\pi}{t}
                \xLeftrightarrow{|t|=1}  t' \modelsltlfin \proj{(\varphi \LTLuntil \psi)}{\pi}{t}.
            $
        \item $\varphi \wedge \psi$, $\LTLweaknext \varphi$ and $\varphi \LTLrelease \psi$ are proven analogously.
    \end{itemize}
    Induction Step ($t = e \cdot t^*$, where $e \in \Sigma, t^* \in \Sigma^+$):
    The induction hypothesis states that $\forall t' \in \Sigma^+ \ldot t' \in \langxx{\varphi}{\pi}{t^*} \Leftrightarrow t' \modelsltlfin \proj{\varphi}{\pi}{t^*}$ (IH).
    We make use of structural induction over $\varphi$. All cases without temporal operators are covered as their proofs above were independent of $|t|$.
    The structural induction hypothesis states for all strict subformulas $\psi$ that
    $\forall t' \in \Sigma^+ \ldot t' \in \langxx{\psi}{\pi}{t} \Leftrightarrow t' \modelsltlfin \proj{\psi}{\pi}{t}$ (SIH2).
    \begin{itemize}
        \item$\LTLnext \varphi$:
            $
                 t' \in \langxx{\LTLnext \varphi}{\pi}{t^*}
                \xLeftrightarrow{|t^*| \geq 2}  t'[1\rangle \in \langxx{\varphi}{\pi}{t}
                \xLeftrightarrow{\text{IH}}  t'[1\rangle \modelsltlfin \proj{\varphi}{\pi}{t}
                \Leftrightarrow  t' \modelsltlfin \LTLnext (\proj{\varphi}{\pi}{t})
                \xLeftrightarrow{t^* = e \cdot t}  t' \modelsltlfin \proj{(\LTLnext \varphi)}{\pi}{t^*}.
            $
        \item$\varphi \LTLuntil \psi$:
            $
                 t' \in \langxx{\varphi \LTLuntil \psi}{\pi}{t^*}
                \xLeftrightarrow{|t^*| \geq 2}  (t' \in \langxx{\psi}{\pi}{t^*}) \lor (t' \in \langxx{\varphi}{\pi}{t^*}) \land (t'[1\rangle \in \langxx{\varphi \LTLuntil \psi}{\pi}{t})
                \xLeftrightarrow{\text{SIH2+IH}}  (t' \modelsltlfin \proj{\psi}{\pi}{t^*}) \lor (t' \models \proj{\varphi}{\pi}{t^*}) \land (t'[1\rangle \modelsltlfin \proj{(\varphi \LTLuntil \psi)}{\pi}{t})
                \Leftrightarrow (t' \modelsltlfin \proj{\psi}{\pi}{t^*}) \lor (t' \models \proj{\varphi}{\pi}{t^*}) \land (t' \modelsltlfin \LTLnext(\proj{(\varphi \LTLuntil \psi)}{\pi}{t}))
                \Leftrightarrow  t' \modelsltlfin \proj{(\varphi \LTLuntil \psi)}{\pi}{t^*}.
            $
        \item $\LTLweaknext \varphi$ and $\varphi \LTLrelease \psi$ are proven analogously.
    \end{itemize}
\end{proof}


\section{Constraint-based Monitoring}


For monitoring, we need to define an \emph{incremental} rewriting that accurately models the semantics of $\proj{\varphi}{\pi}{t}$ while still being able to detect violations early.
To this end, we define an operation $\varphi[\pi,e,i]$, where $e \in \Sigma$ is an event and $i$ is the current position in the trace.
$\varphi[\pi,e,i]$ transforms $\varphi$ into a propositional formula, where the variables are either indexed atomic propositions $p_i$ for $p \in \AP$, or a variable $\varneg{\varphi'}{i+1}$ and $\varpos{\varphi'}{i+1}$ that act as placeholders until new information about the trace comes in.
Whenever the next event $e'$ occurs, the variables are defined with the result of $\varphi'[\pi,e',i+1]$.
If the trace ends, the variables are set to \textit{true} and \textit{false} for $\varposo$ and $\varnego$, respectively.
We define $\varphi[\pi,e,i]$ of a $\forall^2$HyperLTL formula $\forall \pi,\pi' \ldot \varphi$ in NNF, event $e \in \Sigma$, and $i \geq 0$ recursively on the structure of the body $\varphi$:
\begin{align*}
\begin{array}{lcllcl}
a_\pi[\pi,e,i]         \qquad & \coloneqq &
\begin{cases}
\top & \text{if } a \in e\\
\bot & \text{otherwise}\\
\end{cases} \qquad &
(\neg a_\pi)[\pi,e,i]         \qquad & \coloneqq &
\begin{cases}
\top & \text{if } a \notin e\\
\bot & \text{otherwise}\\
\end{cases}\\
a_{\pi'}[\pi,e,i]             & \coloneqq & a_i &
(\neg a_{\pi'})[\pi,e,i]             & \coloneqq & \neg a_i\\
(\varphi \lor \psi)[\pi,e,i]    & \coloneqq & \varphi[\pi,e,i] \lor \psi[\pi,e,i] &
(\varphi \land \psi)[\pi,e,i]    & \coloneqq & \varphi[\pi,e,i] \land \psi[\pi,e,i]\\
(\LTLnext \varphi)[\pi,e,i]     & \coloneqq & \varneg{\varphi}{i+1}&
(\LTLweaknext \varphi)[\pi,e,i] & \coloneqq & \varpos{\varphi}{i+1}
\end{array}\\[\medskipamount]
\begin{array}{lcl}
(\varphi\LTLuntil\psi)[\pi,e,i] & \coloneqq & \psi[\pi,e,i] \lor (\varphi[\pi,e,i] \land \varneg{\varphi \LTLuntil\psi}{i+1})\\
(\varphi\LTLrelease\psi)[\pi,e,i] & \coloneqq & \psi[\pi,e,i] \land (\varphi[\pi,e,i] \lor \varpos{\varphi \LTLrelease\psi}{i+1}) \hspace{2cm}\mbox{}
\end{array}
\end{align*}
We encode a $\forall^2$HyperLTL formula and finite traces into a constraint system, which, as we will show, is satisfiable if and only if the given traces satisfy the formula w.r.t. the finite semantics of HyperLTL.
We write $\varboth{\varphi}{i}$ to denote either $\varneg{\varphi}{i}$ or $\varpos{\varphi}{i}$.
For $e \in \Sigma$ and $t \in \Sigma^*$, we define
\begin{equation*}
\begin{array}{ll}
\constr{\varpos{\varphi}{i}}{\epsilon}   & \defeq \top \\
\constr{\varneg{\varphi}{i}}{\epsilon}       & \defeq \bot \\
\constr{\varboth{\varphi}{i}}{e \cdot t} & \defeq \varphi[\pi, e, i] \land \bigwedge\limits_{\varboth{\psi}{i+1} \in \varphi[\pi, e, i]} \Big( \varboth{\psi}{i+1} \imp \constr{\varboth{\psi}{i+1}}{t} \Big) \\
\encnx{\ap}{\epsilon}{i} & \defeq \top\\
\encnx{\ap}{e \cdot t}{i} & \defeq \bigwedge\limits_{a \in \ap\cap e} a_i \quad \land \bigwedge\limits_{a \in \ap\setminus e} \neg a_i \quad \land \quad \encnx{\ap}{t}{i+1} \enspace ,
\end{array}
\end{equation*}
where we use $\varboth{\psi}{i+1} \in \varphi[\pi, e, i]$ to denote variables $\varboth{\psi}{i+1}$ occurring in the propositional formula $\varphi[\pi, e, i]$.
$\mathit{enc}$ is used to transform a trace into a propositional formula, e.g., $\encnx{\set{a,b}}{\set{a}\set{a,b}}{0} \allowbreak = a_0 \land \neg b_0 \land a_1 \land b_1$.
For $n=0$ we omit the annotation, i.e., we write $\encx{\ap}{t}$ instead of $\encnx{\ap}{t}{0}$.
Also we omit the index $\ap$ if it is clear from the context.
By slight abuse of notation, we use $\constrn{\varphi}{t}{n}$ for some quantifier free HyperLTL formula $\varphi$ to denote $\constr{\varboth{\varphi}{n}}{t}$ if $\card{t} > 0$.
For a trace $t' \in \Sigma^+$, we use the notation
$\enc{t'} \models \constr{\varphi}{t} \text{, which evaluates to \textit{true} if, and only if } \enc{t'} \land \constr{\varphi}{t} \text{ is satisfiable}.$
\subsection{Algorithm}
\label{alg}
Figure~\ref{fig:algorithms-constraint-based} depicts our constraint-based algorithm. Note that this algorithm can be used in an offline and online fashion.
Before we give algorithmic details, consider again, the observational determinism example from the introduction, which is expressed as $\forall^2$HyperLTL formula $\forall \pi,\pi' \ldot (out_{\pi} \leftrightarrow out_{\pi'})\LTLweakuntil(in_{\pi} \nleftrightarrow in_{\pi'})$.
The basic idea of the algorithm is to transform the HyperLTL formula to a formula consisting partially of LTL, which expresses the requirements of the incoming trace in the current step, and partially of HyperLTL.
Assuming the event \set{in,out}, we transform the observational determinism formula to the following formula: $\neg in \lor out \land \LTLnext ((out_{\pi} \leftrightarrow out_{\pi'})\LTLweakuntil(in_{\pi} \nleftrightarrow in_{\pi'}))$.
\begin{wrapfigure}{l}{0.52\textwidth}
    \begin{algorithm}[H]
    \label{alg:algorithms-constraint-based1}
        \SetKwInOut{Input}{Input}
        \SetKwInOut{Output}{Output}
        \Input{$\forall \pi,\pi'\ldot \varphi$, $T \subseteq \Sigma^+$}
        \Output{\textit{violation} or \textit{no violation}}
        \BlankLine
        $\psi \defeq \call{nnf}{\hat{\varphi}}$\\\label{line:transform-formula1}
        $C \defeq \top$\\\label{line:rootconstraint1}
        \ForEach{$t \in T$}{
            $C_t \defeq \varboth{\psi}{0}$\\\label{line:traceconstraint1}
            $t_\mathit{enc} \defeq \top$\\
            \While{$e_i \defeq \call{getNextEvent}{t}$}{
                $t_\mathit{enc} \defeq t_\mathit{enc} \land \encn{e_i}{i}$\\\label{line:enc-construct1}
                \ForEach{$\varboth{\phi}{i} \in C_t$}{
                    $c \defeq \phi[\pi,e_i,i]$\\\label{line:rewrite1}
                    $C_t \defeq C_t \land (\varboth{\phi}{i} \imp c)$\\\label{line:add-constr1}
                }
                \If{$\neg \call{sat}{C \land C_t \land t_\mathit{enc}}$}{
                    \Return \textit{violation}
                }
            }
            \ForEach{$\varpos{\phi}{i+1} \in C_t$}{
                $C_t \defeq C_t \land \varpos{\phi}{i+1}$\\
            }
            \ForEach{$\varneg{\phi}{i+1} \in C_t$}{
                $C_t \defeq C_t \land \neg \varneg{\phi}{i+1}$\\
            }
            $C \defeq C \land C_t$
        }
        \Return \textit{no violation}
        \BlankLine
    \end{algorithm}
    \caption{Constraint-based algorithm for monitoring $\forall^2\hyperltl$ formulas.}
    \label{fig:algorithms-constraint-based1}
    \label{fig:algorithms-constraint-based}
  \vspace{-17pt}
\end{wrapfigure}
A Boolean constraint system is then build incrementally: we start encoding the constraints corresponding to the LTL part (in front of the next-operator) and encode the HyperLTL part (after the next-operator) as variables that are defined when more events of the trace come in.
%
We continue by explaining the algorithm in detail.
In \autoref{line:transform-formula1}, we construct $\psi$ as the negation normal form of the symmetric closure of the original formula.
We build two constraint systems: $C$ containing constraints of previous traces and $C_t$ (built incrementally) containing the constraints for the current trace $t$.
Consequently, we initialize $C$ with $\top$ and $C_t$ with $\varboth{\psi}{0}$ (lines~\ref{line:rootconstraint1} and \ref{line:traceconstraint1}).
If the trace ends, we define the remaining $\varbotho$ variables according to their polarities and add $C_t$ to $C$.
For each new event $e_i$ in the trace $t$, and each ``open'' constraint in $C_t$ corresponding to step $i$, i.e., $\varboth{\phi}{i} \in C_t$, we rewrite the formula $\phi$ (line~\ref{line:rewrite1}) and define $\varboth{\phi}{i}$ with the rewriting result, which, potentially introduced new open constraints $\varboth{\phi'}{i+1}$ for the next step $i+1$.
The constraint encoding of the current trace is aggregated in constraint $t_\mathit{enc}$ (\autoref{line:enc-construct1}).
If the constraint system given the encoding of the current trace turns out to be unsatisfiable, a violation to the specification is detected, which is then returned.

In the following, we sketch two algorithmic improvements.
First, instead of storing the constraints corresponding to traces individually, we use a new data structure, which is a \emph{tree maintaining nodes} of formulas, their corresponding variables and also child nodes.
Such a node corresponds to already seen rewrites.
The initial node captures the (transformed) specification (similar to line~\ref{line:traceconstraint1}) and it is also the root of the tree structure, representing all the generated constraints which replaces $C$ in Fig.~\ref{fig:algorithms-constraint-based}.
Whenever a trace deviates in its rewrite result a new child or branch is added to the tree.
If a rewrite result is already present in the node tree structure there is no need to create any new constraints nor new variables.
This is crucial in case we observe many equal traces or traces behaving effectively the same.
In case no new constraints were added to the constraint system, we omit a superfluous check for satisfiability.

Second, we use \emph{conjunct splitting} to utilize the node tree optimization even more.
%
We illustrate the basic idea on an example.
Consider $\forall \pi, \pi' \ldot \varphi$ with $\varphi = \LTLglobally ((a_\pi \leftrightarrow a_\pi') \lor (b_\pi \leftrightarrow b_\pi'))$, which demands that on all executions on each position at least on of propositions $a$ or $b$ agree in its evaluation.
Consider the two traces $t_1 = \set{a}\set{a}\set{a}$, $t_2 = \set{a}\set{a,b}\set{a}$ that satisfy the specification.
As both traces feature the same first event, they also share the same rewrite result for the first position.
Interestingly, on the second position, we get $(a \lor \neg b) \land s_\varphi$ for $t_1$ and $(a \lor b) \land s_\varphi$ for $t_2$ as the rewrite results.
While these constraints are no longer equal, by the nature of invariants, both feature the same subterm on the right hand side of the conjunction.
We split the resulting constraint on its syntactic structure, such that we would no longer have to introduce a branch in the tree.


\subsection{Correctness}
\label{sec:constraint-based-proof}
In this technical subsection, we will formally prove correctness of our algorithm by showing that our incremental construction of the Boolean constraints is equisatisfiable to the HyperLTL rewriting presented in Section~\ref{sec:rewriting}.
We begin by showing that satisfiability is preserved when shifting the indices, as stated by the following lemma.
\begin{lemma}
\label{lem:encshift}
    For any $\forall^2$HyperLTL formula $\forall \pi,\pi' \ldot \varphi$ over atomic propositions $\ap$, any finite traces $t,t' \in \Sigma^+$ and $n \geq 0$ it holds that
    $
        \encx{\ap}{t'} \models \constr{\varphi}{t} \Leftrightarrow \encnx{\ap}{t'}{n} \models \constrn{\varphi}{t}{n}$.
\end{lemma}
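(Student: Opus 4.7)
The plan is to show that the two propositional formulas on either side of the biimplication differ only by a uniform renaming of variable indices, and that this renaming is a bijection on atomic variables, so satisfiability is preserved. Concretely, define a renaming $\sigma_n$ on the set of propositional variables by $\sigma_n(a_i) \defeq a_{i+n}$ for $a \in \ap$ and $\sigma_n(\varboth{\psi}{i}) \defeq \varboth{\psi}{i+n}$ for every subformula $\psi$ of $\hat{\varphi}$. Extend $\sigma_n$ homomorphically to propositional formulas. Since each input variable is mapped to a distinct output variable, $\sigma_n$ preserves (and reflects) satisfiability for any propositional formula.

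The first step is to prove, by a short induction on $|t|$, the rewriting-level identity
\[
    (\phi[\pi, e, i])\text{ with }i\text{ replaced by }i+n \;=\; \phi[\pi, e, i+n],
\]
which is immediate from the inductive definition of $\phi[\pi, e, i]$: the only places where $i$ appears on the right-hand sides are in the freshly introduced variables $a_i$, $\varneg{\phi'}{i+1}$ and $\varpos{\phi'}{i+1}$, and every recursive call passes $i$ unchanged to its subformulas. Using this, a second induction on the length of $t$ establishes
\[
    \constr{\varboth{\varphi}{n}}{t} \;=\; \sigma_n\bigl(\constr{\varboth{\varphi}{0}}{t}\bigr),
\]
i.e.\ $\constrn{\varphi}{t}{n} = \sigma_n(\constr{\varphi}{t})$. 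In the base case ($t = \epsilon$), both sides are simply $\top$ or $\bot$ according to the polarity of $\varbotho$, and $\sigma_n$ fixes those. In the step case, the conjunction $\phi[\pi,e,i] \land \bigwedge_{\varboth{\psi}{i+1}} (\varboth{\psi}{i+1} \imp \constr{\varboth{\psi}{i+1}}{t'})$ unfolds identically on both sides modulo the shift, by the first claim and the inductive hypothesis.

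Next, a direct induction on $|t'|$ yields $\encnx{\ap}{t'}{n} = \sigma_n(\encx{\ap}{t'})$: the defining equations only insert literals $a_i$ or $\neg a_i$ indexed by the current position, and the recursive call increments the index in lockstep on both sides.

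Putting these together, $\encnx{\ap}{t'}{n} \land \constrn{\varphi}{t}{n}$ is exactly $\sigma_n\bigl(\encx{\ap}{t'} \land \constr{\varphi}{t}\bigr)$. Since $\sigma_n$ is an injective renaming of propositional variables, a satisfying assignment on the left induces a satisfying assignment on the right by composing with $\sigma_n^{-1}$ (and vice versa), which proves the biimplication. The only non-trivial bookkeeping will be in the two structural inductions above, but in each case the index $i$ is threaded through the definitions in a strictly additive way, so the shift commutes cleanly with every clause; I anticipate no genuine obstacle beyond keeping the case analysis exhaustive.
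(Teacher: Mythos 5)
Your proposal is correct and takes the same route as the paper, whose entire proof is the one-line remark ``by renaming of the positional indices''; you have simply made that renaming $\sigma_n$ explicit and verified that it commutes with the definitions of $\varphi[\pi,e,i]$, $\mathit{constr}$, and $\mathit{enc}$, which is exactly the intended argument. No gaps.
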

\begin{proof}
    By renaming of the positional indices.
\end{proof}
In the following lemma and corollary, we show that the semantics of the next operators matches the finite LTL semantics.
\begin{lemma}
\label{lem:encshiftnext}
    For any $\forall^2$HyperLTL formula $\forall \pi,\pi' \ldot \varphi$ over atomic propositions $\ap$ and any finite traces $t,t' \in \Sigma^+$ it holds that
   $
             \enc{t'} \models \constr{\LTLnext \varphi}{t}
            \Leftrightarrow  \enc{t'} \models \constr{\varneg{\varphi}{1}}{t[1\rangle}
            \Leftrightarrow  \enc{t'[1\rangle} \models \constr{\varneg{\varphi}{0}}{t[1\rangle}.
        $
\end{lemma}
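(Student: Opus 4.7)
My plan is to establish the two equivalences separately, after a case split on $|t|$. If $|t| = 1$ then $t[1\rangle = \epsilon$, and by the base case of $\mathit{constr}$ we have $\constr{\varneg{\varphi}{1}}{\epsilon} = \bot$ and $\constr{\varneg{\varphi}{0}}{\epsilon} = \bot$; similarly, unfolding $\constr{\LTLnext\varphi}{t}$ for $t = e$ gives $\varneg{\varphi}{1} \land (\varneg{\varphi}{1} \imp \bot)$, which is unsatisfiable. So all three sides reduce to $\bot$ and the equivalences hold trivially. The interesting case is $|t| \ge 2$, which I treat below.

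For the first equivalence, I unfold the definition of $\mathit{constr}$ on $t = e \cdot s$ with $s = t[1\rangle$. Since $(\LTLnext\varphi)[\pi,e,0] = \varneg{\varphi}{1}$, the single conjunct generated in the big conjunction is indexed by this one variable, so
\[
\constr{\LTLnext\varphi}{t} \;=\; \varneg{\varphi}{1} \,\land\, \bigl(\varneg{\varphi}{1} \imp \constr{\varneg{\varphi}{1}}{s}\bigr).
\]
The variable $\varneg{\varphi}{1}$ is fresh: it neither occurs in $\enc{t'}$ (which only mentions indexed atomic propositions) nor inside $\constr{\varneg{\varphi}{1}}{s}$ (the recursive unfolding only introduces variables at indices $\ge 2$). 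Hence conjoining $\enc{t'}$ with this formula is equisatisfiable to $\enc{t'} \land \constr{\varneg{\varphi}{1}}{s}$, simply by forcing $\varneg{\varphi}{1}$ to $\top$ in any satisfying assignment. This yields the first equivalence.

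For the second equivalence I exploit that indices in $\constr{\varneg{\varphi}{1}}{t[1\rangle}$ are all $\ge 1$. By the recursive definition of $\mathit{enc}$, we can split $\enc{t'} = E_0 \land \encn{t'[1\rangle}{1}$, where $E_0$ is a conjunction of (possibly negated) propositions $a_0$ with $a \in \ap$. Since $E_0$ shares no variables with either $\constr{\varneg{\varphi}{1}}{t[1\rangle}$ or $\encn{t'[1\rangle}{1}$, and is itself satisfiable, it can be discarded without affecting satisfiability. Applying Lemma~\ref{lem:encshift} with $n = 1$ to the formula $\varphi$ and traces $t[1\rangle$, $t'[1\rangle$ then gives
\[
\encn{t'[1\rangle}{1} \models \constr{\varneg{\varphi}{1}}{t[1\rangle} \;\Longleftrightarrow\; \enc{t'[1\rangle} \models \constr{\varneg{\varphi}{0}}{t[1\rangle},
\]
which closes the chain.

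The main obstacle is being careful about the variable bookkeeping: one has to verify that the auxiliary variables $\varboth{\psi}{i}$ introduced by $\mathit{constr}$ are truly fresh with respect to $\mathit{enc}$ and with respect to their own defining constraints, so that the equisatisfiability arguments (dropping $\varneg{\varphi}{1} \land (\varneg{\varphi}{1} \imp \cdot)$, dropping $E_0$) are sound. This is immediate from the inductive structure of the definitions but must be stated explicitly. The rest is a routine unfolding plus one invocation of Lemma~\ref{lem:encshift}.
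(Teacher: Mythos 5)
Your proof is correct and follows essentially the same route as the paper's: unfold the definition of $\constr{\LTLnext\varphi}{t}$ for the first equivalence, then use the absence of index-$0$ variables together with Lemma~\ref{lem:encshift} (at $n=1$) for the second. You are in fact slightly more careful than the paper, which asserts $\constr{\LTLnext \varphi}{t} = \constr{\varneg{\varphi}{1}}{t[1\rangle}$ ``by definition'' where the two formulas are only equisatisfiable modulo the fresh guard variable $\varneg{\varphi}{1}$, and which does not separately treat the $|t|=1$ case; both refinements you add are sound and welcome.
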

\begin{proof}
  Let $\varphi$, $t,t'$ be given.
  It holds that $\constr{\LTLnext \varphi}{t} =
  	\constr{\varneg{\varphi}{1}}{t[1\rangle}$ by definition.
    As $\constr{\varneg{\varphi}{1}}{t[1\rangle}$ by construction does not contain any variables with positional index $0$, we only need to check satisfiablity with respect to $\enc{t'[1\rangle}$. Thus
    $
         \enc{t'} \models \constr{ \LTLnext \varphi}{t}
        \Leftrightarrow  \enc{t'} \models \constr{\varneg{\psi}{1}}{t[1\rangle}
        \Leftrightarrow  \encn{t'[1\rangle}{1} \models \constr{\varneg{\varphi}{1}}{t[1\rangle}
        \xLeftrightarrow{\text{Lem}\ref{lem:encshift}}  \enc{t'[1\rangle} \models \constr{\varneg{\varphi}{0}}{t[1\rangle}
    $.
\end{proof}
\begin{corollary}
\label{lem:encshiftweaknext}
    For any $\forall^2$HyperLTL formula $\forall \pi,\pi' \ldot \varphi$ over atomic propositions $\ap$ and any finite traces $t,t' \in \Sigma^+$ it holds that
    $
    \enc{t'} \models \constr{\LTLweaknext \varphi}{t}
    \Leftrightarrow  \enc{t'} \models \constr{\varpos{\varphi}{1}}{t[1\rangle}
    \Leftrightarrow  \enc{t'[1\rangle} \models \constr{\varpos{\varphi}{0}}{t[1\rangle}.
    $
\end{corollary}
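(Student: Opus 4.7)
The plan is to mirror the proof of Lemma~\ref{lem:encshiftnext} line for line, exploiting the structural symmetry between the strong next and the weak next. The only differences in the relevant definitions are that $(\LTLweaknext \varphi)[\pi,e,i]$ unfolds to $\varpos{\varphi}{i+1}$ rather than $\varneg{\varphi}{i+1}$, and that the empty-trace base case $\constr{\varpos{\varphi}{i}}{\epsilon}$ is $\top$ instead of $\bot$. Crucially, the argument in Lemma~\ref{lem:encshiftnext} never inspects the polarity of the placeholder variable nor its base-case value; it only uses the shape of the unfolding and the fact that the generated variables carry positional indices strictly greater than $0$. So all steps transfer.

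First, I would apply the definition of $\mathit{constr}$ on $e \cdot t[1\rangle$, where $e$ is the first event of $t$. By the rewriting rule for $\LTLweaknext$ we have $(\LTLweaknext \varphi)[\pi, e, 0] = \varpos{\varphi}{1}$, and the only conjunct generated by the recursive call is $\varpos{\varphi}{1} \imp \constr{\varpos{\varphi}{1}}{t[1\rangle}$. This immediately yields $\constr{\LTLweaknext \varphi}{t} = \constr{\varpos{\varphi}{1}}{t[1\rangle}$, giving the first equivalence.

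For the second equivalence, I would observe that, by construction, $\constr{\varpos{\varphi}{1}}{t[1\rangle}$ contains no variables at positional index $0$ (every atomic-proposition variable and every $\varbotho$ variable introduced by the recursive unfolding has index $\geq 1$). Hence the satisfiability of $\enc{t'} \land \constr{\varpos{\varphi}{1}}{t[1\rangle}$ is unaffected by the conjuncts of $\enc{t'}$ that mention index $0$, so one can replace $\enc{t'}$ by $\encn{t'[1\rangle}{1}$. Then a single application of Lemma~\ref{lem:encshift} with $n = 1$ re-indexes both sides down by one position, producing $\enc{t'[1\rangle} \models \constr{\varpos{\varphi}{0}}{t[1\rangle}$, which closes the chain.

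There is no real obstacle here, since the Corollary is the weak-next/positive-variable dual of the preceding lemma; the only point demanding minor care is to verify that swapping $\varneg{}$ for $\varpos{}$ (and consequently $\bot$ for $\top$ in the $\epsilon$-base case) never gets used in the chain of implications, so the template applies unchanged.
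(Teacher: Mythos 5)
Your proposal is correct and follows exactly the route the paper intends: the corollary is stated without a separate proof precisely because it is the $\LTLweaknext$/$\varposo$ dual of Lemma~\ref{lem:encshiftnext}, and your argument reproduces that lemma's three steps (unfold the definition, observe the absence of index-$0$ variables, apply Lemma~\ref{lem:encshift}) while correctly noting that neither the variable polarity nor the $\epsilon$-base-case value $\top$ versus $\bot$ is ever used. Nothing is missing.
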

We will now state the correctness theorem, namely that our algorithm preserves the HyperLTL rewriting semantics.
\begin{theorem}
\label{thm:equisat-ltlfin-sat}
    For every $\forall^2 $HyperLTL formula $\forall \pi,\pi' \ldot \varphi$ in negation normal form over atomic propositions $\ap$ and any finite trace $t \in \Sigma^+$ it holds that
    $
        \forall t' \in \Sigma^+ \ldot t' \modelsltlfin \proj{\varphi}{\pi}{t} \Leftrightarrow \encx{\ap}{t'} \models \constr{\varphi}{t}.
    $
\end{theorem}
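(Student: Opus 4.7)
The plan is to prove the theorem by induction on $|t|$, mirroring the structure of the proof of Theorem~\ref{thm:ltlfin-compat}. Inside each case on $|t|$, I would perform structural induction on the body $\varphi$ (in NNF). The propositional base cases ($a_\pi$, $\neg a_\pi$, $a_{\pi'}$, $\neg a_{\pi'}$) reduce to directly comparing the definitions: for $a_\pi$, both $\proj{a_\pi}{\pi}{t}$ and $a_\pi[\pi,e,0]$ evaluate to $\top$ or $\bot$ according to membership of $a$ in $t[0] = e$; for $a_{\pi'}$, we have $\proj{a_{\pi'}}{\pi}{t} = a$ on the projection side and $\constr{\varboth{a_{\pi'}}{0}}{t} = a_0 \land \dots$ on the constraint side, and $\enc{t'} \models a_0$ holds iff $a \in t'[0]$. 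The Boolean cases $\land, \lor$ follow by splitting the satisfaction/satisfiability relation across the connective and applying the structural induction hypothesis.

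For the base case $|t| = 1$ (so $t = e$), the temporal operators degenerate: $\proj{(\LTLnext \varphi)}{\pi}{t} = \bot$ matches $\constr{\LTLnext \varphi}{t} = \constr{\varneg{\varphi}{1}}{\epsilon} = \bot$ by the definition of $\mathit{constr}$ on the empty trace; symmetrically $\proj{(\LTLweaknext \varphi)}{\pi}{t} = \top$ matches via $\varpos{\varphi}{1} \mapsto \top$; and $\proj{(\varphi \LTLuntil \psi)}{\pi}{t} = \proj{\psi}{\pi}{t}$ matches $\constr{\varphi \LTLuntil \psi}{t}$, since the placeholder $\varneg{\varphi \LTLuntil \psi}{1}$ gets defined to $\bot$ by $\constr{\cdot}{\epsilon}$, leaving the $\psi[\pi,e,0]$ summand; the release case is analogous.

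For the inductive step $t = e \cdot t^*$ with $t^* \in \Sigma^+$, I would assume the statement holds for $t^*$ (the main IH) and additionally use structural induction over $\varphi$. The propositional cases carry over unchanged. For $\LTLnext \varphi$, apply Lemma~\ref{lem:encshiftnext} to rewrite $\enc{t'} \models \constr{\LTLnext \varphi}{t}$ as $\enc{t'[1\rangle} \models \constr{\varneg{\varphi}{0}}{t[1\rangle}$, then invoke the main IH on $t^* = t[1\rangle$ and match this against the LTL finite semantics of $\LTLnext \proj{\varphi}{\pi}{t^*}$; the weak next case uses Corollary~\ref{lem:encshiftweaknext} in the same way. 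For $\varphi \LTLuntil \psi$, unfold both sides by their expansion laws into $\psi$-now versus $\varphi$-now-and-next: the non-temporal conjuncts $\psi[\pi,e,i]$ and $\varphi[\pi,e,i]$ are handled by the structural IH, while the shifted conjunct involving $\varneg{\varphi \LTLuntil \psi}{i+1}$ is handled by the main IH after applying Lemma~\ref{lem:encshiftnext}. The release case is dual, using Corollary~\ref{lem:encshiftweaknext}.

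The main obstacle I anticipate is bookkeeping around the placeholder variables $\varboth{\phi}{i}$: they appear in $\constr{\varphi}{t}$ as defined constraints $\varboth{\phi}{i+1} \Rightarrow \constr{\varboth{\phi}{i+1}}{t[1\rangle}$, and one must argue that these definitions can be unfolded equisatisfiably so that the only "free" proposition variables appearing positively are $a_j$ for $j < |t'|$ (pinned by $\enc{t'}$) and that unconstrained indices beyond $|t'|$ behave as the finite trace semantics dictates. Lemma~\ref{lem:encshift} provides the index-renaming tool that lets us line up the recursion in $t$ with the recursion in the positional index; once this is in place, each temporal case becomes a routine equivalence check between the expansion law of the operator and the corresponding clause in the definition of $\varphi[\pi,e,i]$.
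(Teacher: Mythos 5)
Your overall strategy is exactly the paper's: an outer induction on $|t|$ with an inner structural induction on $\varphi$, the same treatment of the base case via the $\epsilon$-clauses of $\mathit{constr}$ (which collapse $\LTLnext\varphi$ to $\bot$, $\LTLweaknext\varphi$ to $\top$, and $\varphi\LTLuntil\psi$ to $\constr{\psi}{e}$), and the same use of Lemma~\ref{lem:encshiftnext} (resp.\ Corollary~\ref{lem:encshiftweaknext}) together with the main induction hypothesis for $\LTLnext$, $\LTLweaknext$, and the shifted conjunct of $\LTLuntil$/$\LTLrelease$.

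There is, however, one step that you flag as an anticipated ``obstacle'' without actually discharging it, and it is the only place in the proof where something beyond definition-unfolding happens. In the $\lor$ case (and again inside the $\LTLuntil$ case) the induction hypotheses hand you a \emph{disjunction of two separate satisfiability claims}, $\bigl(\enc{t'} \models \constr{\varphi}{t}\bigr) \lor \bigl(\enc{t'} \models \constr{\psi}{t}\bigr)$, whereas the target is a \emph{single} satisfiability claim $\enc{t'} \models \constr{\varphi \lor \psi}{t}$, in which the propositional disjunction $\varphi[\pi,e,0] \lor \psi[\pi,e,0]$ sits under one common conjunction of all the defining implications $\varboth{\phi}{1} \imp \constr{\varboth{\phi}{1}}{t^*}$ coming from \emph{both} subformulas. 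These two statements are not equivalent for arbitrary constraint systems. The paper closes this with the step marked $\dag$: the placeholder variables introduced for $\varphi$ and for $\psi$ are disjoint by construction, so a model witnessing, say, the $\varphi$-side extends to the combined system by assigning $\bot$ to every placeholder $\varboth{\psi'}{1}$ of the other branch, which vacuously satisfies that branch's defining implications. Your proposal instead attributes the resolution to Lemma~\ref{lem:encshift}, but that lemma only provides index renaming; it does not supply the variable-disjointness and model-extension argument that is actually needed. Without it, the $\lor$ and $\LTLuntil$ cases of your inductive step do not go through as written, so you should make this argument explicit.
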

\begin{proof}
    By induction over the size of $t$.
    Induction Base ($t = e$, where $e \in \Sigma$):
    We choose $t' \in \Sigma^+$ arbitrarily.
    We distinguish by structural induction the following cases over the formula $\varphi$:
    \begin{itemize}
        \item$a_{\pi}$:
            $\constr{a_{\pi}}{e} = (a_\pi)[\pi,e,0] = \top \text{ if, and only if, } a \in e$.
            Thus
            $\enc{t'} \models \constr{a_{\pi}}{e} \Leftrightarrow a \in e \Leftrightarrow t' \modelsltlfin \proj{a_{\pi}}{\pi}{e}$.
        \item$a_{\pi'}$:
            $\constr{a_{\pi'}}{e} = (a_{\pi'})[\pi,e,0] = a_0$
            Thus
            $\enc{t'} \models \constr{a_{\pi'}}{e} \Leftrightarrow \enc{t'} \models a_0 \xLeftrightarrow{\text{def } enc} a \in t'[0] \Leftrightarrow t' \modelsltlfin a \xLeftrightarrow{\text{def } |^{\pi}} t' \modelsltlfin \proj{a_{\pi'}}{\pi}{e}$.
        \item $\neg a_{\pi}$ and $\neg a_{\pi'}$ are proven analogously.
    \end{itemize}
    The structural induction hypothesis states that
    $\forall t' \in \Sigma^+ \ldot  t' \modelsltlfin \proj{\psi}{\pi}{t} \Leftrightarrow \enc{t'} \models \constr{\psi}{t}$ (SIH1), where $\psi$ is a strict subformula of $\varphi$.
    \begin{itemize}
        \item$\varphi \lor \psi$:
            $
                 t' \modelsltlfin \proj{(\varphi \lor \psi)}{\pi}{e}
                \Leftrightarrow (t' \modelsltlfin \proj{\varphi}{\pi}{e}) \lor (t' \modelsltlfin \proj{\psi}{\pi}{e})
                \xLeftrightarrow{\text{SIH1}} (\enc{t'} \models \constr{\varphi}{e}) \lor (\enc{t'} \models \constr{\psi}{e})
                \Leftrightarrow (\enc{t'} \models \varphi[\pi,e,0]) \lor (\enc{t'} \models \psi[\pi,e,0])
                \Leftrightarrow  \enc{t'} \models \varphi[\pi,e,0] \lor \psi[\pi,e,0]
                \xLeftrightarrow{\text{def } enc}  \enc{t'} \models (\varphi \lor \psi)[\pi,e,0]
                \Leftrightarrow  \enc{t'} \models \constr{\varphi \lor \psi}{e}.
            $
        \item$\LTLnext \varphi$:
            $\constr{\LTLnext \varphi}{e} = (\LTLnext\varphi)[\pi,e,0] = \varneg{\varphi}{0} \land (\varneg{\varphi}{0} \imp \constr{\varneg{\varphi}{0}}{\epsilon}) = \bot$.
            Thus
            $t' \modelsltlfin \proj{(\LTLnext \varphi)}{\pi}{e} = \bot \Leftrightarrow \enc{t'} \models \bot$.
        \item$\varphi \LTLuntil \psi$:
            $\constr{\varphi \LTLuntil \psi}{e} = (\varphi \LTLuntil \psi)[\pi,e,0] = \psi[\pi,e,0] \lor (\varphi[\pi,e,0] \land \constr{\varneg{\varphi \LTLuntil \psi}{0}}{\epsilon}) =  \psi[\pi,e,0] = \constr{\psi}{e}$.
            Thus
            $t' \modelsltlfin \proj{(\varphi \LTLuntil \psi)}{\pi}{e} \modelsltlfin \proj{\psi}{\pi}{e} \xLeftrightarrow{\text{SIH1}} \enc{t'} \models \constr{\psi}{e}$.
        \item $\varphi \land \psi$, $\LTLweaknext \varphi$, and $\varphi \LTLrelease \psi$ are proven analogously.
    \end{itemize}
    Induction Step ($t = e \cdot t^*$, where $e \in \Sigma$ and $t^* \in \Sigma^+$):
    The induction hypothesis states that
    $\forall t' \in \Sigma^+ \ldot t' \modelsltlfin \proj{\varphi}{\pi}{t^*} \Leftrightarrow \enc{t'} \models \constr{\varphi}{t^*}$ (IH).
    We make use of structural induction over $\varphi$.
    All base cases are covered as their proofs above are independent of $|t|$.
    The structural induction hypothesis states for all strict subformulas $\psi$ that
    $\forall t' \in \Sigma^+ \ldot t' \modelsltlfin \proj{\psi}{\pi}{t} \Leftrightarrow \enc{t'} \models \constr{\psi}{t}$.
    \begin{itemize}
        \item$\varphi \lor \psi$:\\
            $
            \begin{array}{ll}
                & t' \modelsltlfin \proj{(\varphi \lor \psi)}{\pi}{t}
                \Leftrightarrow  t' \modelsltlfin \proj{\varphi}{\pi}{t} ~ \lor ~ t' \modelsltlfin \proj{\psi}{\pi}{t}\\
                \xLeftrightarrow{\text{SIH1}} & \enc{t'} \models \constr{\varphi}{t} ~ \lor ~ \enc{t'} \models \constr{\psi}{t}\\
                \xLeftrightarrow{t=e\cdot t^*} & \enc{t'} \models (\varphi[\pi,e,0] \land \bigwedge\limits_{\varboth{\varphi'}{1} \in \varphi[\pi,e,0]} \varboth{\varphi'}{1} \imp \constr{\varboth{\varphi'}{1}}{t^*})\\
                  & \lor ~ \enc{t'} \models (\psi[\pi,e,0] \land \bigwedge\limits_{\varboth{\psi'}{1} \in \varphi[\pi,e,0]} \varboth{\psi'}{1} \imp \constr{\varboth{\psi'}{1}}{t^*})\\
                \xLeftrightarrow{\dag} & \enc{t'} \models (\varphi[\pi,e,0]  \lor \psi[\pi,e,0] )\\
                & \land \bigwedge\limits_{\varboth{\varphi'}{1} \in \varphi[\pi,e,0]} \varboth{\varphi'}{1} \imp \constr{\varboth{\varphi'}{1}}{t^*}
                \end{array}$
                
                $\begin{array}{ll}
                & \land \bigwedge\limits_{\varboth{\psi'}{1} \in \varphi[\pi,e,0]} \varboth{\psi'}{1} \imp \constr{\varboth{\psi'}{1}}{t^*}\\
                \Leftrightarrow & \enc{t'} \models (\varphi \lor \psi)[\pi,e,0]\\
                & \land \bigwedge\limits_{\varboth{\phi}{1} \in (\varphi \lor \psi)[\pi,e,0]} \varboth{\phi}{1} \imp \constr{\varboth{\phi}{1}}{t^*}\\
                \xLeftrightarrow{t=e\cdot t^*} & \enc{t'} \models \constr{\varphi \lor \psi}{t}\\
            \end{array}
            $\\
            $\dag$:
                $\Leftarrow$: trivial,
                $\Rightarrow$: Assume a model $M_\varphi$ for $\enc{t'} \models \varphi[\pi,e,0] \land A$. By construction, constraints by $\varphi$ do not share variable with constraints by $\psi$.
                We extend the model by assigning $\varboth{\psi'}{1}$ with $\bot$, for all $\varboth{\psi'}{1} \in \psi[\pi,e,0]$ and assigning the rest of the variables in $\psi[\pi,e,0]$ arbitrarily.
    	\item$\LTLnext \varphi$:
    	$
    	t' \modelsltlfin \proj{(\LTLnext \varphi)}{\pi}{t}
    	\Leftrightarrow  t' \modelsltlfin \LTLnext \proj{\varphi}{\pi}{t^*}
    	\Leftrightarrow  t'[1\rangle \modelsltlfin \proj{\varphi}{\pi}{t^*}
    	\xLeftrightarrow{\text{IH}}  \enc{t'[1\rangle} \models \constr{\varphi}{t^*}
    	\xLeftrightarrow{\text{Lem}\ref{lem:encshiftnext}}  \enc{t'} \models \constr{\LTLnext \varphi}{t}{}.
    	$
        \item$\varphi \LTLuntil \psi$:\\
            $
            \begin{array}{ll}
                & t' \modelsltlfin \proj{(\varphi \LTLuntil \psi)}{\pi}{t}\\
                \Leftrightarrow & t' \modelsltlfin \proj{\psi}{\pi}{t} ~ \lor ~ \left[t' \modelsltlfin \proj{\varphi}{\pi}{t} ~ \land ~ t'[1\rangle \modelsltlfin \proj{(\varphi \LTLuntil \psi)}{\pi}{t^*} \right]\\
                \xLeftrightarrow{\text{SIH1+IH},\text{L}\ref{lem:encshiftnext}} & \enc{t'} \models \constr{\psi}{t} \\
                 &\lor \left[\enc{t'} \models \constr{\varphi}{t} ~ \land ~ \enc{t'} \models \constr{\varneg{\varphi \LTLuntil \psi}{1}}{t^*}\right]\\
                \Leftrightarrow & \enc{t'} \models (\psi[\pi,e,0] \land \bigwedge\limits_{\varboth{\psi'}{1} \in \psi[\pi,e,0]} \varboth{\psi'}{1} \imp \constr{\varboth{\psi'}{1}}{t^*})\\
                  & \lor ~ \left(
                \begin{array}{ll}
                    \enc{t'} & \models (\varphi[\pi,e,0] \land \bigwedge\limits_{\varboth{\varphi'}{1} \in \varphi[\pi,e,0]} \varboth{\varphi'}{1} \imp \constr{\varboth{\varphi'}{1}}{t^*})\\
                    \land ~ \enc{t'} & \models (\varneg{\varphi \LTLuntil \psi}{1} \land \varneg{\varphi \LTLuntil \psi}{1} \imp \constr{\varneg{\varphi \LTLuntil \psi}{1}}{t^*})
                \end{array}\right)\\
                \xLeftrightarrow{\text{same as } \dag} & \enc{t'} \models ( \psi[\pi,e,0] \lor (\varphi[\pi,e,0] \land \varneg{\varphi \LTLuntil \psi}{1}))\\
                & \land \bigwedge\limits_{ \varboth{\psi'}{1} \in \psi[\pi,e,0]} \varboth{\psi'}{1} \imp \constr{\varboth{\psi}{1}}{t^*}\\
                & \land \bigwedge\limits_{\varboth{\varphi'}{1} \in \varphi[\pi,e,0]} \varboth{\varphi'}{1} \imp \constr{\varboth{\varphi'}{1}}{t^*}\\
                & \land ~ \varneg{\varphi \LTLuntil \psi}{1} \imp \constr{\varneg{\varphi \LTLuntil \psi}{1}}{t^*}\\
                \Leftrightarrow & \enc{t'} \models \varphi \LTLuntil \psi[\pi,e,0]\\
                & \land \bigwedge\limits_{{\varboth{\phi}{1}} \in \varphi \LTLuntil \psi[\pi,e,0]} \varboth{\phi}{1} \imp \constr{\varboth{\phi}{1}}{t^*}\\
                \Leftrightarrow & \enc{t'} \models \constr{\varphi \LTLuntil \psi}{t}\\
            \end{array}
            $
           \item $\varphi \land \psi$, $\LTLweaknext \varphi$, and $\varphi \LTLrelease \psi$ are proven analogously.
    \end{itemize}
\end{proof}

\begin{corollary}
\label{coro:same-lang-hyper-sat}
    For any $\forall^2$HyperLTL formula $\forall \pi,\pi' \ldot \varphi$ in negation normal form over atomic propositions $\ap$ and any finite traces $t,t' \in \Sigma^+$ it holds that
       $t' \in \langx{\varphi}{t} \Leftrightarrow \encx{\ap}{t'} \models \constr{\hat{\varphi}}{t}$.
\end{corollary}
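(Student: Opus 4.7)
The plan is to derive the corollary purely by chaining three results that are already available in the excerpt, namely Lemma~\ref{lem:symm-closure} (on the symmetric closure), Theorem~\ref{thm:ltlfin-compat} (rewriting soundness into finite-trace LTL), and Theorem~\ref{thm:equisat-ltlfin-sat} (equisatisfiability of the constraint encoding with the finite-trace LTL projection). No new induction should be needed.

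First, I would apply Lemma~\ref{lem:symm-closure} to rewrite the left-hand side $\langx{\varphi}{t}$ as $\langxx{\hat{\varphi}}{\pi}{t}$. Then I would invoke Theorem~\ref{thm:ltlfin-compat} with the formula $\hat{\varphi}$ playing the role of the body to obtain $t' \in \langxx{\hat{\varphi}}{\pi}{t} \iff t' \modelsltlfin \proj{\hat{\varphi}}{\pi}{t}$. Finally, Theorem~\ref{thm:equisat-ltlfin-sat}, again instantiated with $\hat{\varphi}$, gives $t' \modelsltlfin \proj{\hat{\varphi}}{\pi}{t} \iff \encx{\ap}{t'} \models \constr{\hat{\varphi}}{t}$. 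Chaining the three equivalences yields the desired statement.

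The only point that needs a brief justification is that the precondition of both theorems, namely that the quantifier body is in negation normal form, is inherited by $\hat{\varphi}$. Since $\varphi$ is in NNF and $\varphi[\pi'/\pi, \pi/\pi']$ is obtained by swapping trace variables (an operation that commutes with the NNF transformation), the conjunction $\hat{\varphi} = \varphi \land \varphi[\pi'/\pi, \pi/\pi']$ is again in NNF, so both theorems apply verbatim.

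I do not anticipate any real obstacle here: the heavy lifting, in particular the structural and outer inductions over $t$, is already carried out in Theorems~\ref{thm:ltlfin-compat} and~\ref{thm:equisat-ltlfin-sat}. The corollary is essentially a bookkeeping step that packages the rewriting semantics (one fixed trace, one free trace, LTL-style) together with the symmetric closure into a single, symmetric statement about $\langx{\varphi}{t}$, which is exactly what the algorithm of Section~\ref{alg} relies on when it starts from $\psi \defeq \call{nnf}{\hat{\varphi}}$ in line~\ref{line:transform-formula1}.
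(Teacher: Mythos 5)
Your proposal is correct and matches the paper's own proof, which likewise chains Theorem~\ref{thm:ltlfin-compat} and Theorem~\ref{thm:equisat-ltlfin-sat} instantiated with $\hat{\varphi}$ (with Lemma~\ref{lem:symm-closure} implicitly supplying the identity $\langx{\varphi}{t} = \langxx{\hat{\varphi}}{\pi}{t}$). Your explicit remarks on citing the symmetric-closure lemma and on NNF being preserved under trace-variable swapping are welcome additions of rigor but do not change the argument.
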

\begin{proof}
    $
             t' \in \langx{\varphi}{t}
            \xLeftrightarrow{\text{Thm}\ref{thm:ltlfin-compat}}  t' \modelsltlfin \proj{\hat{\varphi}}{\pi}{t}
            \xLeftrightarrow{\text{Lem}\ref{thm:equisat-ltlfin-sat}}  \enc{t'} \models \constr{\hat{\varphi}}{t} .
    $
\end{proof}

\begin{lemma}
\label{lem:earlyterm}
    For any $\forall^2$HyperLTL formula $\forall \pi,\pi' \ldot \varphi$ in negation normal form over atomic propositions $\ap$ and any finite traces $t,t' \in \Sigma^+$ it holds that
        $\encx{\ap}{t'} \nmodels \constr{\varphi}{t} \Rightarrow \forall t'' \in \Sigma^+ \ldot t' \leq t'' \imp \encx{\ap}{t''} \nmodels \constr{\varphi}{t}$.
\end{lemma}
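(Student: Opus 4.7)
\smallskip
\noindent\textbf{Proof plan for Lemma~\ref{lem:earlyterm}.} The plan is to show that extending the observed trace only adds conjuncts to the propositional encoding, so an inconsistency once detected can never be resolved. The key observation is that $\encx{\ap}{\cdot}$ is defined recursively such that concatenating more events yields a larger conjunction of literals over fresh positional indices that do not clash with previously introduced ones.

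First, I would fix $t' \leq t''$, i.e.\ $t'$ is a prefix of $t''$, and write $t'' = t' \cdot s$ for some (possibly empty) $s \in \Sigma^*$. If $s = \epsilon$ there is nothing to show. Otherwise, unfolding the recursive definition of $\enc{\cdot}$ gives
\[
  \encx{\ap}{t''} \;=\; \encx{\ap}{t'} \;\wedge\; \encnx{\ap}{s}{|t'|}.
\]
This identity is a straightforward induction on $|t'|$ using the defining equation $\encnx{\ap}{e\cdot t}{i} = (\bigwedge_{a\in\ap\cap e} a_i \wedge \bigwedge_{a\in\ap\setminus e} \neg a_i) \wedge \encnx{\ap}{t}{i+1}$; each extra event contributes literals with positional indices $\geq |t'|$, which by construction are disjoint from those appearing in $\encx{\ap}{t'}$.

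From this decomposition it follows immediately that $\encx{\ap}{t''}$ propositionally entails $\encx{\ap}{t'}$. Hence
\[
  \encx{\ap}{t''} \wedge \constr{\varphi}{t} \;\models\; \encx{\ap}{t'} \wedge \constr{\varphi}{t}.
\]
By the hypothesis, $\encx{\ap}{t'} \wedge \constr{\varphi}{t}$ is unsatisfiable, so the stronger conjunction $\encx{\ap}{t''} \wedge \constr{\varphi}{t}$ is unsatisfiable as well, which by definition of $\models$ used in the paper means $\encx{\ap}{t''} \nmodels \constr{\varphi}{t}$.

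I expect no real obstacle: the substantive content is purely syntactic monotonicity of $\enc{\cdot}$. The only point requiring minor care is checking that the fresh positional variables introduced by $\encnx{\ap}{s}{|t'|}$ do not collide with any variable occurring in $\constr{\varphi}{t}$ or in $\encx{\ap}{t'}$, which is guaranteed because both use indices strictly below $|t'|$ (the placeholder variables $\varboth{\psi}{i}$ inside $\constr{\varphi}{t}$ are disjoint in naming from the propositional atoms $a_i$, and the $a_i$ appearing in $\encx{\ap}{t'}$ range only over $i < |t'|$).
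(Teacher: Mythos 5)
Your proof is correct and takes essentially the same route as the paper's: the paper argues by contradiction that a model of $\encx{\ap}{t''} \land \constr{\varphi}{t}$ would yield a model of $\encx{\ap}{t'} \land \constr{\varphi}{t}$, which is exactly the contrapositive of your entailment argument via the decomposition $\encx{\ap}{t''} = \encx{\ap}{t'} \land \encnx{\ap}{s}{|t'|}$. One small remark: your closing aside about variable freshness is not needed for this direction (adding conjuncts preserves unsatisfiability regardless of shared variables), and the claim that $\constr{\varphi}{t}$ only uses indices below $|t'|$ is in fact not guaranteed when $|t| > |t'|$ --- but this does not affect the validity of your argument.
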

\begin{proof}
    We proof this via contradiction. We choose $t,t'$ as well as $\varphi$ arbitrarily, but in a way such that $\enc{t'} \nmodels \constr{\varphi}{t}$ holds.
    Assume that there exists a continuation of $t'$, that we call $t''$, for which $\enc{t''} \models \constr{\varphi}{t}$ holds.
    So there has to exist a model assigning truth values to the variables in $\constr{\varphi}{t}$, such that the constraint system is consistent.
    From this model we extract all assigned truths values for positional variables for position $|t'|$ to $|t''|-1$.
    As $t'$ is a prefix of $t''$, we can use these truth values to construct a valid model for $\enc{t'} \models \constr{\varphi}{t}$, which is a contradiction.
\end{proof}

\begin{corollary}
\label{coro:set-same-lang-hyper-sat}
    For any $\forall^2$HyperLTL formula $\forall \pi,\pi' \ldot \varphi$ in negation normal form over atomic propositions $\ap$ and any finite set of finite traces $T \in \powerset(\Sigma^+)$ and finite trace $t' \in \Sigma^+$ it holds that
    
        \[t' \in \bigcap\limits_{t \in T}\langx{\varphi}{t} \quad \Longleftrightarrow \quad \encx{\ap}{t'} \models \bigwedge\limits_{t \in T}\constr{\hat{\varphi}}{t}.\]
\end{corollary}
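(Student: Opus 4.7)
The plan is to reduce the set-indexed corollary to the single-trace Corollary~\ref{coro:same-lang-hyper-sat} by establishing a chain of three equivalences:
\begin{align*}
t' \in \bigcap_{t \in T}\langx{\varphi}{t}
&\Longleftrightarrow \forall t \in T \ldot t' \in \langx{\varphi}{t} \\
&\Longleftrightarrow \forall t \in T \ldot \encx{\ap}{t'} \models \constr{\hat{\varphi}}{t} \\
&\Longleftrightarrow \encx{\ap}{t'} \models \bigwedge_{t \in T} \constr{\hat{\varphi}}{t}.
\end{align*}
The first of these simply unpacks the definition of intersection, and the second is a pointwise invocation of Corollary~\ref{coro:same-lang-hyper-sat} for each $t \in T$; both dispatch in one line.

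The third equivalence is the only non-routine step, and it is where I expect the main obstacle. Its $\Longleftarrow$ direction is immediate: any single assignment satisfying $\encx{\ap}{t'} \land \bigwedge_{t \in T} \constr{\hat{\varphi}}{t}$ restricts to a satisfying assignment of each individual $\encx{\ap}{t'} \land \constr{\hat{\varphi}}{t}$, so Corollary~\ref{coro:same-lang-hyper-sat} then supplies the left conjunct. The $\Longrightarrow$ direction requires combining per-trace satisfying assignments into one joint assignment. My argument would proceed by noting that the positional variables $a_i$ are completely pinned down by $\encx{\ap}{t'}$ and hence already agree across all conjuncts, while the remaining freedom lives in the placeholder variables $\varboth{\psi}{j}$ generated by the recursive unfolding of $\constr$. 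These placeholders, as the per-trace construction of $C_t$ in the algorithm of Section~\ref{alg} makes explicit, are instantiated locally per trace, so the individual witness assignments glue on their disjoint placeholder domains -- together with the common $\encx{\ap}{t'}$-induced assignment on the $a_i$'s -- into a single model of the full conjunction.
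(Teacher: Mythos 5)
Your proof is correct and takes essentially the same route as the paper, whose own argument consists of the remark that constraint systems for distinct traces are distinct, an appeal to Corollary~\ref{coro:same-lang-hyper-sat}, and ``the same reasoning as in earlier proofs'' --- i.e.\ the $\dag$-style combination of models over disjoint placeholder variables. Your explicit gluing of the per-trace witnesses along the shared positional variables (pinned by $\encx{\ap}{t'}$) and the per-trace placeholder variables is exactly that argument, spelled out in more detail than the paper gives.
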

\begin{proof}
    It holds that
    $\forall t,t' \in \Sigma^+ \ldot t \neq t' \imp \constr{\varphi}{t} \neq \constr{\varphi}{t'}$.
    Follows with same reasoning as in earlier proofs combined with \autoref{coro:same-lang-hyper-sat}.
\end{proof}

\section{Experimental Evaluation}
\tikzset{external/export=true}
\begin{figure}[t]
	\resizebox{\textwidth}{!}{
		\input{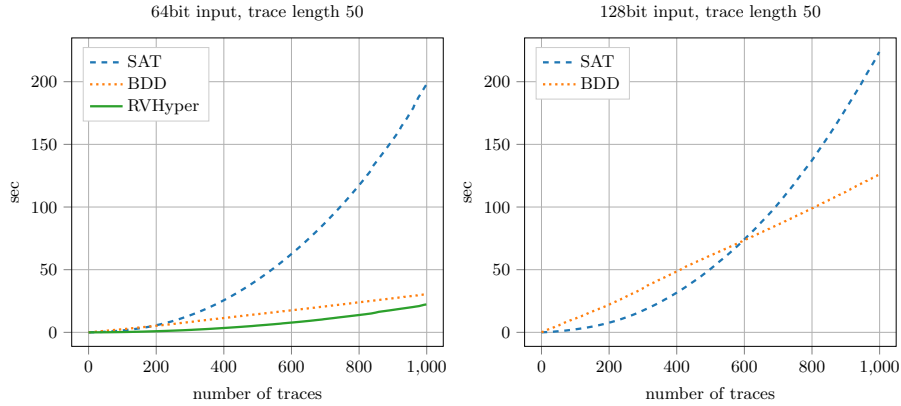}
	}
	\caption{Runtime comparison between RVHyper and our constraint-based monitor on a non-interference specification with traces of varying input size.}
	\label{fig:rand-ni-aps}
\end{figure}
\tikzset{external/export=false}
We implemented two versions of the algorithm presented in this paper.
The first implementation encodes the constraint system as a Boolean satisfiability problem (SAT), whereas the second one represents it as a (reduced ordered) binary decision diagram (BDD).
The formula rewriting is implemented in a Maude~\cite{conf/rta/ClavelDELMMT03} script.
The constraint system is solved by either CryptoMiniSat~\cite{conf/sat/SoosNC09} or CUDD~\cite{somenzi2009cudd}.
All benchmarks were executed on an Intel Core i5-6200U CPU @2.30GHz with 8GB of RAM. 
The set of benchmarks chosen for our evaluation is composed out of two benchmarks presented in earlier publications~\cite{conf/rv/FinkbeinerHST17,conf/tacas/FinkbeinerHST18} plus instances of \emph{guarded invariants} at which our implementations excels.
\smallparagraph{Non-interference.}
Non-interference~\cite{conf/sp/GoguenM82a,journals/jcs/McLean92} is an important information flow policy demanding that an observer of a system cannot infer any high security input of a system by observing only low security input and output.
Reformulated we could also say that all low security outputs $\vec{o}^{low}$ have to be equal on all system executions as long as the low security inputs $\vec{i}^{low}$ of those executions are the same:
$\forall \pi, \pi' \ldot ({\vec{o}^{low}_\pi} \leftrightarrow {\vec{o}^{low}_{\pi'}}) \LTLweakuntil ({\vec{i}^{low}_\pi} \nleftrightarrow {\vec{i}^{low}_{\pi'}}).$
This class of benchmarks was used to evaluated RVHyper~\cite{conf/tacas/FinkbeinerHST18}, an automata-based runtime verification tool for HyperLTL formulas.
We repeated the experiments and depict the results in Fig.~\ref{fig:rand-ni-aps}.
We choose a trace length of $50$ and monitored non-interference on $1000$ randomly generated traces, where we distinguish between a $64$ bit input (left) and an $128$ bit input (right).
For $64$ bit input, our BDD implementation performs comparably well to RVHyper, which statically constructs a monitor automaton.
For $128$ bit input, RVHyper was not able to construct the automaton in reasonable time. Our implementation, however, shows promising results for this benchmark class that puts the automata-based construction to its limit.




\smallparagraph{Detecting Spurious Dependencies in Hardware Designs.}
\tikzset{external/export=true}
\begin{figure}[t]
	\resizebox{\textwidth}{!}{
		\input{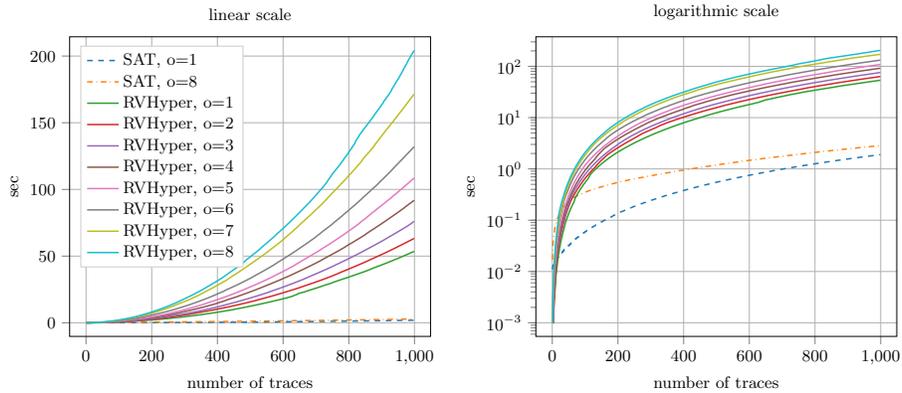}
	}
	\caption{Runtime comparison between RVHyper and our constraint-based monitor on the guarded invariant benchmark with trace lengths $20$, $20$ bit input size.}
	\label{fig:invariants}
\end{figure}
\tikzset{external/export=false}
The problem whether input signals influence output signals in hardware designs, was considered in~\cite{conf/tacas/FinkbeinerHST18}.
Formally, we specify this property as the following $\hyperltl$ formula:
$
\forall \pi_1 \forall \pi_2 \ldot
(\vec{o}_{\pi_1} \leftrightarrow \vec{o}_{\pi_2}) \LTLweakuntil (\overline{\vec{i}}_{\pi_1} \nleftrightarrow \overline{\vec{i}}_{\pi_2}),
$
where $\overline{\vec{i}}$ denotes all inputs except $\vec i$.
Intuitively, the formula asserts that for every two pairs of execution traces $(\pi_1,\pi_2)$ the value of $\vec{o}$ has to be the same until there is a difference between $\pi_1$ and $\pi_2$ in the input vector $\overline{\vec{i}}$, i.e., the inputs on which $\vec o$ may depend.
We consider the same hardware and specifications as in~\cite{conf/tacas/FinkbeinerHST18}. The results are depicted in Table~\ref{tbl:rvhyper-results}.
\begin{table}[t]
	\caption{Average results of our implementation compared to RVHyper on traces generated from circuit instances. Every instance was run $10$ times.}
		\label{tbl:rvhyper-results}
	\centering
    \setlength{\tabcolsep}{5pt}
	\begin{tabular}{lrrrrrr}
		\hline \noalign{\smallskip}
		instance          & \#\,traces & length & time RVHyper & time SAT & time BDD  \\ \noalign{\smallskip}\hline\noalign{\smallskip}
		\textsc{xor1}     &       19   &    5   &     12ms     &     47ms &    49ms   \\
		\textsc{xor2}     &     1000   &    5   &  16913ms     &    996ms &  1666ms   \\
		counter1          &      961   &   20   &   9610ms     &   8274ms &   303ms   \\
		counter2          &     1353   &   20   &  19041ms     &  13772ms &   437ms   \\
		\textsc{mux1}     &     1000   &    5   &  14924ms     &    693ms &   647ms   \\
		\textsc{mux2}     &       80   &    5   &    121ms     &     79ms &    81ms   \\ \hline
	\end{tabular}
\end{table}
    Again, the BDD implementation handles this set of benchmarks well.
    \tikzset{external/export=true}
    \begin{wrapfigure}{l}{0.52\textwidth}
    	\centering
    	\resizebox{0.5\textwidth}{!}{
    		\input{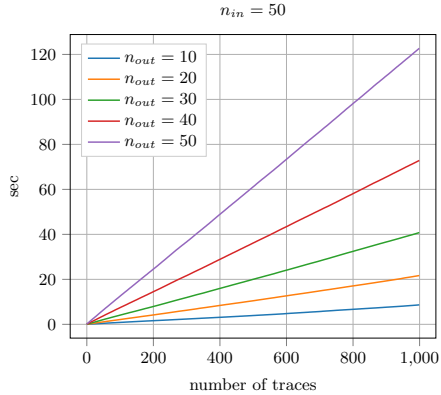}
    	}
    	\caption{Runtime of the SAT-based algorithm on the guarded invariant benchmark with a varying number of atomic propositions.}
    	\label{fig:many-aps}
    	\vspace{-17pt}
    \end{wrapfigure}
    \tikzset{external/export=false}
    The biggest difference can be seen between the runtimes for counter2.
    This is explained by the fact that this benchmark demands the highest number of observed traces, and therefore the impact of the quadratic runtime costs in the number of traces dominates the result.
    We can, in fact, clearly observe this correlation between the number of traces and the runtime on RVHyper's performance over all benchmarks.
    On the other hand our constraint-based implementations do not show this behavior.

\smallparagraph{Guarded Invariants.}
We consider a new class of benchmarks, called \textit{guarded invariants}, which express a certain invariant relation between two traces, which are, additionally, guarded by a precondition.
Fig.~\ref{fig:invariants} shows the results of monitoring an arbitrary invariant $P : \Sigma \rightarrow \bool$ of the following form:
$\forall \pi, \pi' \ldot \LTLeventually (\vee_{i \in I} i_{\pi} \nleftrightarrow i_{\pi'}) \rightarrow \LTLglobally (P(\pi) \leftrightarrow P(\pi'))$. Our approach significantly outperforms RVHyper on this benchmark class, as the conjunct splitting optimization, described in Section~\ref{alg}, synergizes well with SAT-solver implementations.


\smallparagraph{Atomic Proposition Scalability.}
While RVHyper is inherently limited in its scalability concerning formula size as the construction of the deterministic monitor automaton gets increasingly hard, the rewrite-based solution is not affected by this limitation.
To put it to the test we have ran the SAT-based implementation on guarded invariant formulas with up to 100 different atomic propositions.
Formulas have the form:
$\forall \pi, \pi' \ldot (\wedge_{i = 1}^{n_{in}} (in_{i,\pi} \leftrightarrow in_{i,\pi'})) \rightarrow \LTLglobally (\vee_{j = 1}^{n_{out}} (out_{j,\pi} \leftrightarrow out_{j,\pi'})),$
where $n_{in}, n_{out}$ represents the number of input and output atomic propositions, respectively.
Results can be seen in Fig.~\ref{fig:many-aps}.
Note that RVHyper already fails to build monitor automata for $|n_{in}+n_{out}| > 10$.

\section{Conclusion}
We pursued the success story of rewrite-based monitors for trace properties by applying the technique to the runtime verification problem of Hyperproperties. We presented an algorithm that, given a $\forall^2$HyperLTL formula, incrementally constructs constraints that represent requirements on future traces, instead of storing traces during runtime.
Our evaluation shows that our approach scales in parameters where existing automata-based approaches reach their limits.
\smallparagraph{Acknowledgments.}
We thank Bernd Finkbeiner for his valuable feedback on earlier versions of this paper.

\bibliographystyle{splncs04}
\bibliography{main}

\end{document}